\documentclass[final]{IEEEtran}

\IEEEoverridecommandlockouts
\usepackage{cite}
\usepackage{amsmath,amssymb,amsfonts,amsthm}
\usepackage{algorithmic}
\usepackage{graphicx}
\usepackage{textcomp}
\usepackage{xcolor}
\usepackage{url}
\usepackage{lipsum}
\usepackage{multirow}
\usepackage{circuitikz}
\usepackage{afterpage}
\usepackage{bbm}
\usepackage{dsfont}
\usepackage[ruled,lined,boxed]{algorithm2e}
\usepackage{mathtools, nccmath}
\usepackage{bm}
\usepackage{threeparttable}
\usepackage{svg}
\usepackage{csquotes}
\usepackage[T1]{fontenc}
\usepackage{makecell}
\usepackage{tikz}
\usepackage{tikz-3dplot}
\usepackage{pgfplots}
\pgfplotsset{compat=1.15}
\usepackage{subcaption}
\usepackage{caption}
\usepackage{siunitx}
\usepackage{cuted}
\usepackage{setspace}
\renewcommand{\baselinestretch}{0.9685}

\makeatletter
\let\orig@subsection\subsection
\def\subsection{\vspace{-4pt}\orig@subsection}
\makeatother

\DeclareMathAlphabet{\mathbcal}{OMS}{cmsy}{b}{n}
\def\BibTeX{{\rm B\kern-.05em{\sc i\kern-.025em b}\kern-.08em
		T\kern-.1667em\lower.7ex\hbox{E}\kern-.125emX}}
 
\newtheorem{prop}{Proposition}

\newtheoremstyle{iremark}
{\topsep}   
{\topsep}   
{\upshape}  
{0pt}       
{\itshape}  
{:}         
{5pt plus 1pt minus 1pt} 
{\thmname{#1}\thmnumber{ \itshape#2}\thmnote{ (#3)}} 
\theoremstyle{iremark}
\newtheorem{remark}{Remark}

\allowdisplaybreaks
\mathchardef\mhyphen="2D
\makeatletter 
\let\myorg@bibitem\bibitem
\def\bibitem#1#2\par{%
	\@ifundefined{bibitem@#1}{%
		\myorg@bibitem{#1}#2\par
	}{%
		\begingroup
		\color{\csname bibitem@#1\endcsname}%
		\myorg@bibitem{#1}#2\par
		\endgroup
	}%
}
\makeatother

\makeatletter
\def\@IEEEtitleabstractindextextfont{%
    \fontsize{10pt}{10pt}\bfseries\selectfont 
}
\makeatother

\begin{document}
{\title{Spatially Reconfigurable Antenna Systems for 6G: EM-based Channel Modeling, Measurements, and Orientation Design\\
}

 \author{
 			Chen~Xu,~\IEEEmembership{Graduate Student Member,~IEEE},
 			Yizhu Yan,~\IEEEmembership{Graduate Student Member,~IEEE},
 			Gengbo Wu,~\IEEEmembership{Senior Member,~IEEE},
 			Shu Sun,~\IEEEmembership{Senior Member,~IEEE},
 			and Xianghao~Yu,~\IEEEmembership{Senior Member,~IEEE}
	\thanks{
		C. Xu, Y. Yan, G. Wu, and X. Yu are with the Department of Electrical Engineering, City University of Hong Kong, Hong Kong (E-mail: cxu297-c@my.cityu.edu.hk; yizhuyan3-c@my.cityu.edu.hk; bogwu2@cityu.edu.hk; alex.yu@cityu.edu.hk).
		
		Shu Sun is with the School of Information Science and Electronic Engineering, Shanghai Jiao Tong University, Shanghai 200240, China (E-mail: shusun@sjtu.edu.cn).
		
		This paper was presented in part at the IEEE 103rd Vehicular Technology Conference (VTC Spring), Nice, France, June 2026~\cite{xu2026generalembasedchannelmodel}.
		
	}
  
 }

\maketitle

\begin{abstract}

Spatially reconfigurable antenna systems (SRASs) are recognized as a key physical-layer technology for sixth-generation (6G) systems.
By dynamically adjusting each antenna element's spatial configuration, e.g., position and orientation, SRASs can revamp favorable channel conditions for reliable high-rate data transmission.
However, in widely adopted channel models, antennas are typically modeled as ideal isotropic radiators, and the vectorial nature of electromagnetic (EM) propagation is neglected.
This oversimplified model precludes full exploitation of the degrees of freedom offered by SRASs for performance enhancement.
To address this issue, in this paper, by leveraging the theoretical framework of spherical vector wave expansion, we develop an EM-based channel model tailored for SRAS-enabled multiple-input multiple-output (MIMO) systems.
The proposed EM-based channel model is applicable to antennas with arbitrary structures and intrinsically accounts for the vectorial nature of EM propagation, thereby enabling accurate characterization of EM effects such as polarization mismatch on channel gain.
\textcolor{black}{Full-wave simulations and experimental measurements are conducted, and the results show excellent agreement with theoretical predictions.}
\textcolor{black}{Simulation results also reveal that antenna orientation exerts a more pronounced influence on the achievable rate than antenna displacement.}
\textcolor{black}{Therefore, building upon the derived channel model, a manifold optimization method is proposed to maximize the sum-rate of an SRAS-enabled multiuser-MIMO system by optimizing antenna orientations.}
Simulation results demonstrate that the proposed scheme improves the sum-rate by up to 16.6\% and 19.3\% compared to systems employing movable antennas and conventional fixed antennas, respectively.

\end{abstract}

\begin{IEEEkeywords}
	Channel model, manifold optimization, spatially reconfigurable antenna system, spherical vector wave expansion.
\end{IEEEkeywords}


\section{Introduction}

Driven by emerging application scenarios such as vehicle-to-everything networks, massive machine-type communications, and integrated space-air-ground-sea networks, sixth-generation (6G) communication systems are expected to provide terabit-per-second connectivity and ultra-wide-area coverage \cite{Song2025}.
However, current multiple-input multiple-output (MIMO)-based wireless networks largely overlook the spatial degrees of freedom (DoFs) inherent in individual antenna elements, thereby limiting their capacity to meet these stringent requirements.
Recent advances in antenna technology have spurred significant interest in spatially reconfigurable antenna systems (SRASs)~\cite{6474484}.
Unlike conventional fixed antenna arrays, SRASs consist of multiple arbitrarily positioned and oriented antenna elements.
By leveraging the spatial reconfigurability of each antenna, SRASs can shape favorable channel conditions to enable reliable high-rate data transmission, and are thus recognized as a key physical-layer enabling technology for future 6G systems.

Previous works have extensively investigated various implementations of SRASs, such as fluid antennas (FAs)~\cite{9264694,10740058,9650760,9131873}, movable antennas (MAs)~\cite{liu2025nfddd,10318061,10848372,liu2025ma}, and rotatable antennas (RAs) \cite{11134688,zheng2025rotatab,Yang2017,11142311}.
Specifically, by leveraging the fluidity of liquid metals, FAs achieve spatial reconfigurability through the activation of feeding ports at different positions within a line space.
In contrast, MAs and RAs primarily rely on mechanically tunable feeding structures to enable physical translation or rotation of the antenna element.
Although existing works have thoroughly investigated the design methodologies and performance gains of SRAS-enabled systems, they still predominantly rely on classical channel models, which were originally developed for fixed antenna arrays.
In particular, classical channel models assume that antennas are ideal isotropic radiators, and thus fail to characterize the radiation properties of arbitrarily oriented and positioned antennas in SRASs.
Moreover, electromagnetic (EM) waves are modeled as scalar plane waves in classical channel models.
This neglect of the vectorial nature of EM propagation prevents current channel models from capturing the profound impact of EM characteristics on the channel gain.
Consequently, these modeling limitations hinder the exploitation of the DoFs offered by antenna spatial configuration, thereby severely constraining the achievable performance of SRAS-based communication systems.

To establish more accurate and comprehensive channel models, recent efforts have shifted toward EM-based channel modeling, which seeks to construct channel models from a fundamental EM perspective~\cite{11006094,9110848,11258091,10500425,10500751,5159555}.
The authors of \cite{11006094} first modeled fading channels as random fields constrained to satisfy the \emph{scalar} wave equations, thereby obtaining a physically consistent representation.
Following this line of insight, several works have sought to extend channel models to a \emph{vectorial} formulation to fully capture key EM effects, such as polarization mismatch caused by misalignment in antenna orientation.
For instance, the authors of \cite{11258091} proposed an EM-based channel model based on the closed-form expression of the vector radiation field.
Although this model intrinsically captures vectorial EM effects, it is restricted to scenarios where both the transceivers employ half-wavelength dipoles, thereby significantly limiting its applicability to more general antenna configurations.
To enhance the generality of channel modeling, the channel models based on dyadic Green's functions have been proposed in \cite{10500425,10500751} and applied to extremely large-scale MIMO and holographic MIMO systems.
In principle, the model is applicable to arbitrary antenna types.
However, it requires prior knowledge of the antenna's surface current distribution, which is inaccessible in practice, hindering its application in real-world scenarios.
Another theoretical framework, known as the spherical vector wave expansion (SVWE), can be employed to analyze the EM propagation between antennas of arbitrary physical structure.
In contrast to Green's function-based methods, SVWE requires only sampled values of the EM field on a grid, which are readily measurable in practical engineering setups.
Once these samples are acquired, SVWE can analytically characterize the EM propagation between transmit and receive antennas.
Owing to this, SVWE has been widely adopted in applications such as antenna measurements~\cite{1143727} and EM scattering analysis~\cite{alma990026142230205776}.
The authors of \cite{5159555} first employed SVWE for channel modeling and demonstrated its potential for capturing EM propagation effects.
However, the model in \cite{5159555} neglected the impact of the transmit antenna's type, position, and orientation on the channel gain.
These simplifications, while reducing analytical complexity, render the model incapable of being employed in SRAS-enabled systems.

Hence, all these limitations of existing works motivate us to develop an EM-consistent channel model which can be employed to comprehensively characterize the channels of general SRASs, thereby enabling a spatial configurability design that fully exploits the DoFs of SRASs to enhance the communication performance. 
This paper investigates the EM-based channel model and the orientation design method tailored for SRAS-enabled MIMO systems.
The main contributions are summarized as follows:
\begin{itemize}
	\item We decompose the channel modeling into three fundamental EM processes: 1) the radiation process, which maps the transmitted signal to the radiated EM field; 2) the propagation process, which transforms the radiated field into incident EM field at the receiver; 3) the reception process, which converts the incident field into the received signal. By leveraging the SVWE framework, we derive exact analytical expressions for these EM processes in the spherical mode domain.
	\item Based on the analysis of the EM processes, we derive a closed-form expression for the channel gain between arbitrarily positioned and oriented antennas in free space, and establish a general EM-based channel model for SRAS-enabled MIMO systems. The proposed channel model is applicable to antennas with arbitrary structures. Moreover, it intrinsically accounts for the vectorial nature of EM propagation, thereby enabling accurate characterization of the impact of EM effects, such as polarization mismatch, on channel gain.
	\item Building upon the derived EM-based channel model, we investigate the sum-rate maximization problem in an SRAS-enabled multiuser (MU)-MIMO system, where the orientation of each transmit antenna can be freely adjusted. Then, a Riemannian gradient ascent method on the special orthogonal group $\mathrm{SO}(3)$ is proposed to find a locally optimal solution to this highly non-convex optimization problem.
	\item Using three-dimensional (3D) printing technology, we fabricated an antenna rotator to enable precise control over the spatial orientation of the antenna. Based on this setup, we measured the channel gain of an SRAS composed of horn antennas in a microwave anechoic chamber. The measurement results show excellent agreement with theoretical predictions, thereby validating the accuracy of the proposed channel model.
\end{itemize}

{\it Notations:} 
$\mathbb{R}$ and $\mathbb{C}$ represent the sets of real numbers and complex numbers, respectively.
$\delta_{i,j}$ denotes the Kronecker delta, which equals $1$ if $i=j$ and $0$ otherwise.
$\mathrm{Re}(z)$ and $\mathrm{Im}(z)$ represent the real and imaginary parts of a complex number $z$, respectively. 
The imaginary unit is represented as $\jmath$ such that $\jmath^2=-1$.
Matrices are denoted by bold uppercase letters (e.g., $\mathbf{A}$).
Vectors are represented by bold lowercase letters (e.g., $\mathbf{a}$).
Scalars are denoted by normal font (e.g., $N$). 
$\left[ \mathbf{A} \right]_{i,j}$ denotes the $(i,j)$-th entry of matrix $\mathbf{A}$, and $\mathbf{a}_{i}$ denotes the $i$-th entry of vector $\mathbf{a}$.
$\left(\cdot\right)^T$, $\left(\cdot\right)^*$, and $\left(\cdot\right)^H$ stand for the transpose, conjugate, and Hermitian transpose of a matrix, respectively.
$\mathrm{Tr}(\mathbf{\cdot})$ indicates the trace of a matrix. 
$\mathbf{I}_{N}$ represents an $N \times N$ identity matrix.
The Frobenius norm is written as $\left\| \cdot\right\|_F$.

\section{Analysis of Radiation, Propagation, and Reception Processes}\label{sec:EM_char_analysis}

Consider an SRAS-enabled MIMO system comprising  $N_{\mathrm{t}}$ transmit and $N_{\mathrm{r}}$ receive antennas.
Denote the transmit signal vector as $\mathbf{v} \in \mathbb{C}^{N_{\mathrm{t}} \times 1}$.
The corresponding receive signal vector $\mathbf{w} \in \mathbb{C}^{N_{\mathrm{r}} \times 1}$ is then given by
\begin{equation}
	\begin{aligned}
  		\mathbf{w} = \mathbf{H} \mathbf{v} + \mathbf{z},
  	\end{aligned}
\end{equation}
where $\mathbf{H} \in \mathbb{C}^{N_{\mathrm{r}} \times N_\mathrm{t}}$ is the channel matrix and $\mathbf{z} \in \mathbb{C}^{N_{\mathrm{r}} \times 1}$ represents the additive white Gaussian noise (AWGN) vector whose entries are independent and identically distributed (i.i.d.) complex Gaussian variables with zero mean and variance $\sigma_\mathrm{n}^2$.

\begin{figure}[t]
	\centering
	\hspace{4mm}\includegraphics[trim=10 1 0 11, clip, width=0.5\textwidth]{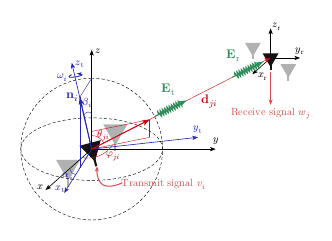}
	\caption{Illustration of the channel modeling between the $i$-th transmit antenna and the $j$-th receive antenna.}
	\label{fig:MIMO_setting_illu} 
	\vspace{-3mm}  
\end{figure}

As shown in Fig.~\ref{fig:MIMO_setting_illu}, two local coordinate systems $(x_\mathrm{t},y_\mathrm{t},z_\mathrm{t})$ and $(x_\mathrm{r},y_\mathrm{r},z_\mathrm{r})$ are centered at the $i$-th transmit and $j$-th receive antennas, respectively, with their $z$-axes aligned with the antenna normal directions.
A global coordinate system $(x,y,z)$ shares its origin with the transmit antenna and its $z$-axis is parallel to the $z_\mathrm{r}$-axis.
In the global frame $(x,y,z)$, the $i$-th transmit antenna's spatial orientation is defined by its normal vector $\mathbf{n}_i$, parameterized by azimuth angle $\alpha_i$, elevation angle $\beta_i$, and spin angle $\omega_i$ (rotation about $\mathbf{n}_i$).
The relative position from the $i$-th transmit antenna to the $j$-th receive antenna is described by the displacement vector $\mathbf{d}_{ji}$, characterized by the azimuth angle $\varphi_{ji}$ and elevation angle $\theta_{ji}$.

Let $h_{ji}=[\mathbf{H}]_{j,i}$ denote the channel gain from the $i$-th transmit to the $j$-th receive antenna.
As illustrated in Fig.~\ref{fig:MIMO_setting_illu}, the transformation from $v_i$ to $w_j$ decomposes into three fundamental EM processes:
A)~radiation: conversion of $v_i$ into the radiated electric field $\mathbf{E}_{\mathrm{t}}$ by the $i$-th transmit antenna;
B)~propagation: transport of $\mathbf{E}_{\mathrm{t}}$ through free space, yielding the incident field $\mathbf{E}_{\mathrm{r}}$ at the receiver;
C)~reception: conversion of $\mathbf{E}_{\mathrm{r}}$ into the received signal $w_j$ at the $j$-th receive port.
These processes are analyzed in the following three subsections, respectively, within the SVWE framework.

\subsection{Radiation Characterization: From $v_i$ to $\mathbf{E}_\mathrm{t}$}

The vector Helmholtz equation governs any time-harmonic EM field in source-free regions. 
As its general solution in spherical coordinates, the spherical vector wave functions (SVWFs) constitute a complete and orthogonal basis for representing arbitrary time-harmonic EM fields. 
By expanding the field as a linear superposition of SVWFs, the SVWE provides a mathematically complete and physically exact framework for EM field characterization \cite{Stratton2015}.
Unlike plane- and cylindrical-wave expansions, SVWE leverages the inherent symmetry of SVWFs, and thereby enables a natural and efficient characterization of how antenna rotations and translations shape the EM field.
This makes SVWE particularly well-suited for analyzing channel characteristics in SRAS-enabled systems.

According to the SVWE framework, when observed in the local spherical coordinate system $(r_\mathrm{t},\theta_\mathrm{t},\varphi_\mathrm{t})$ associated with $(x_\mathrm{t},y_\mathrm{t},z_\mathrm{t})$, the radiated electric field $\mathbf{E}_{\mathrm{t}}$ can be decomposed as
\begin{equation}\label{eq:elec_field_as_sum_spherical_func}
	\begin{aligned}
  		\mathbf{E}_{\mathrm{t}}(r_\mathrm{t},\theta_\mathrm{t},\varphi_\mathrm{t})=\frac{\kappa v_i}{\sqrt{\eta}}  \sum_{s=1}^{2}\sum_{n=1}^{N}\sum_{m=-n}^{n}   T_{smn}\mathbf{F}_{smn}^ {(3)} (r_\mathrm{t},\theta_\mathrm{t},\varphi_\mathrm{t}),
  	\end{aligned}
\end{equation}
where $\kappa$ and $\eta$ denote the wavenumber and the wave impedance of the medium, respectively. 
The truncation index $N$ is a constant determined by the electrical size of the antenna, typically chosen as $N = \lceil \kappa a \rceil$, where $a$ is the radius of the smallest sphere enclosing the antenna~\cite{Hansen_1988}.
In (\ref{eq:elec_field_as_sum_spherical_func}), $\mathbf{F}^{(3)}_{smn}$ represents the outward-propagation SVWF of mode $(s,m,n)$ and $T_{smn}$ is the corresponding expansion coefficient.

For a specific SVWF $\mathbf{F}^{(c)}_{smn}$, the index $c \in \{ 1,2,3,4\}$, where $c=1$ or $2$ corresponds to standing waves, and $c=3$ and $4$ represent outward- and inward-propagation traveling waves, respectively.
The index $s$ distinguishes the wave polarization type, with $s=1$ denoting transverse electric (TE) modes and $s=2$ denoting transverse magnetic (TM) modes.
The indices $n$ and $m$ characterize the angular dependence of the SVWFs.
Specifically, $n$ determines the polar variation and overall angular complexity, while 
$m$ governs the azimuthal periodicity.
With a set of SRAS spatial parameters $\{r_\mathrm{t}, \theta_\mathrm{t}, \varphi_\mathrm{t}\}$, the explicit expressions of SVWFs are provided in \cite[eq.~(A1.45)]{Hansen_1988} and are therefore omitted here for a neat presentation.

The coefficient $T_{smn}$ represents the antenna's ability to convert an excitation signal $v_i$ into the corresponding SVWF of mode index $(s,m,n)$, also referred to as the radiation coefficient. 
It is fully determined by the antenna's physical properties, e.g., materials, structure, and feeding network, and therefore $T_{smn}$ are typically known a priori for channel modeling.
 
\begin{remark}
It is worth noting that conventional Green's function-based approaches\cite{10500425,10500751} require prior knowledge of the antenna's surface current distribution, which is typically inaccessible in real-world engineering scenarios.
In contrast, SVWE only relies on measurable radiation coefficients $T_{smn}$ to reconstruct the antenna's radiated electric field, making it particularly well-suited for practical channel modeling.
\end{remark}

\subsection{Propagation Characterization: From $\mathbf{E}_\mathrm{t}$ to $\mathbf{E}_\mathrm{r}$}\label{subsec:prop_charc}
In an SRAS-enabled system, variation in the position and orientation of the transmit antenna alters the radiation field $\mathbf{E}_{\mathrm{t}}$, which in turn significantly affects the incident field $\mathbf{E}_{\mathrm{r}}$ at the receive antenna. 
Consequently, the propagation characteristics from $\mathbf{E}_{\mathrm{t}}$ to $\mathbf{E}_{\mathrm{r}}$ must be carefully analyzed to accurately model the impact of transmitter orientation variations on channel gain.
Note that in (\ref{eq:elec_field_as_sum_spherical_func}), we obtain the expression for the radiated electric field $\mathbf{E}_{\mathrm{t}}$ in the coordinate system $(r_\mathrm{t},\theta_\mathrm{t},\varphi_\mathrm{t})$.
To quantitatively analyze the transformation from $\mathbf{E}_{\mathrm{t}}$ to $\mathbf{E}_{\mathrm{r}}$,
we first transform $\mathbf{E}_{\mathrm{t}}$ into the spherical coordinate system $(r_\mathrm{r},\theta_\mathrm{r},\varphi_\mathrm{r})$ associated with $(x_\mathrm{r},y_\mathrm{r},z_\mathrm{r})$, and then isolate the inward-propagating component, which constitutes the incident field $\mathbf{E}_{\mathrm{r}}$ at the receiver.

In fact, owing to the rotation and translation properties of SVWFs, this transformation from $\mathbf{E}_{\mathrm{t}}$ to $\mathbf{E}_{\mathrm{r}}$ can be carried out analytically.
Consider a coordinate system $(x,y,z)$ that undergoes a sequence of rotations: 1) by an angle $\alpha$ about the $z$-axis; 2) by an angle $\beta$ about the rotated $y$-axis; and 3) by an angle $\omega$ about the subsequently rotated $z$-axis, yielding the final rotated system $(x',y',z')$.
The rotation property of SVWF can be expressed as
\begin{equation}\label{eq:rotation_formula}
	\begin{aligned}
  		\mathbf{F}^{(c)}_{smn}(r,\theta,\varphi) = \sum_{\mu=-n}^n D^n_{\mu m}({  \omega,\beta,\alpha})  \mathbf{F}^{(c)}_{s \mu n}(r',\theta',\varphi'),
  	\end{aligned}
\end{equation}
where the rotation coefficient, known as the Wigner D-function, is given by
\begin{equation}\label{eq:D_func}
	\begin{aligned}
  		D^n_{\mu m}({  \omega,\beta,\alpha}) = e^{\jmath m \alpha} d^n_{\mu m}(\beta) e^{\jmath \mu \omega}.
  	\end{aligned}
\end{equation}
In particular, $d^n_{\mu m}(\beta)$ is a complex-valued function, and its explicit expression is provided in \cite[eq.~(A2.5)]{Hansen_1988}.

If the coordinate system $(x,y,z)$ is translated by a distance $A$ along the $z$-axis to obtain $(x',y',z')$, the translation property of SVWF can be expressed as
\begin{equation}\label{eq:translation_formula}
	\begin{aligned}
  		\mathbf{F}_{{s \mu n}}^{(c)}(r, \theta, \varphi) = 
								\sum_{\sigma=1}^{2}
								\sum_{\substack{\nu=|\mu| \\ \nu \neq 0}}^{N}
								C_{\sigma \mu \nu}^{{sn (c)}}(\kappa A)
								\mathbf{F}_{\sigma \mu \nu}^{(1)}(r', \theta', \varphi')
  	\end{aligned}
\end{equation}
for $r' < |A|$.
The translation coefficient $C_{\sigma \mu \nu}^{{sn(c)}}(\kappa A)$ is a complex-valued function, and its explicit expression is provided in \cite[eq.~(A3.3)]{Hansen_1988}.

\textcolor{black}{In essence, the above rotation and translation formulas provide the mathematical machinery for expressing $\mathbf{E}_{\mathrm{t}}$ in the receiver's local coordinates. Specifically, the rotation property accounts for the relative orientation between the two antennas, while the translation property handles their spatial separation. Once $\mathbf{E}_{\mathrm{t}}$ is re-expressed in $(r_\mathrm{r},\theta_\mathrm{r},\varphi_\mathrm{r})$, it takes the form of a superposition of standing waves. This standing-wave form arises from the translation property in (\ref{eq:translation_formula}), which maps spherical waves to a superposition of standing waves ($c=1$) in the translated coordinate system. Each standing wave can be further decomposed into an outward-propagating traveling wave ($c=3$) and an inward-propagating traveling wave ($c=4$). Physically, only the inward-propagating component reaches the receive antenna port and induces a response, and this component is precisely the incident field $\mathbf{E}_{\mathrm{r}}$. 
}

\subsection{Reception Characterization: From $\mathbf{E}_\mathrm{r}$ to $w_j$}
After the propagation process is analyzed in Sec. \ref{subsec:prop_charc}, the incident field $\mathbf{E}_{\mathrm{r}}$ at the receive antenna can be expressed in the receive coordinate system as a superposition of inward-propagating SVWFs. 
Therefore, we have
\begin{equation}\label{eq:Er}
	\begin{aligned}
  		\mathbf{E}_{\mathrm{r}}(r_\mathrm{r},\theta_\mathrm{r},\varphi_\mathrm{r})= \sum_{s=1}^{2}\sum_{n=1}^{N}\sum_{m=-n}^{n}   Q_{smn}\mathbf{F}_{smn}^ {(4)} (r_\mathrm{r},\theta_\mathrm{r},\varphi_\mathrm{r}),
  	\end{aligned}
\end{equation}
where $Q_{smn}$ is the expansion coefficients.  Accordingly, the received signal $w_j$ induced by $\mathbf{E}_{\mathrm{r}}$ at the $j$-th receive antenna can be expressed as
\begin{equation}\label{eq:Er_receive_signal_form}
	\begin{aligned}
  		w_j = \sum_{s=1}^{2}\sum_{n=1}^{N}\sum_{m=-n}^{n} Q_{smn} R_{smn},
  	\end{aligned}
\end{equation}
where $R_{smn}$ is the reception coefficient, characterizing the antenna's ability to convert an incident spherical mode $(s,m,n)$ into the receive signal at ports.

For reciprocal antennas, which constitute the vast majority of practical radiating elements, the reception and radiation characteristics are related by electromagnetic reciprocity.
Specifically, the reception coefficient $R_{smn}$ can be directly determined from the radiation coefficient $T_{smn}$~\cite{Hansen_1988}, namely
\begin{equation}
	\begin{aligned}
  		R_{smn} = (-1)^m T_{s,-m,n}.
  	\end{aligned}
\end{equation}

Building upon the above analysis of the three fundamental physical processes, i.e., radiation, propagation, and reception, that determine the EM characteristics of the wireless channel, we have established a foundational understanding of the EM-based channel model. 
This provides the theoretical foundation for developing a channel model tailored to SRAS-enabled MIMO systems.


\section{EM-Based Channel Model for SRAS-Enabled MIMO}\label{sec:EM-based_channel_model}

Building on the SVWE analysis in Sec. ~\ref{sec:EM_char_analysis}, we establish the EM-based channel model for SRAS-enabled MIMO systems. We first derive a closed-form channel gain under the line-of-sight (LoS) assumption, which clearly reveals the role of antenna type, orientation, and relative position. The model is then extended to general multipath channels.

\subsection{LoS Channel Model}\label{sec:LoS_model}

To enable a compact matrix formulation, we introduce a one-to-one index mapping $(s,m,n) \leftrightarrow  l$ that compresses the multi-dimensional SVWF mode indices into a single index~\cite{Hansen_1988}, and denote the total number of spherical vector wave modes as $L$. By stacking the radiation coefficients $T_{smn}$ and receive coefficients $R_{smn}$ into vectors $\mathbf{t} \in \mathbb{C}^{L \times 1}$ and $\mathbf{r} \in \mathbb{C}^{L \times 1}$, respectively, the following proposition provides an explicit expression for $h_{ji}$.

\begin{prop}\label{prop:hij}
{\color{black}{
	The EM-based LoS channel gain $h_{ji}$ is given by
\begin{equation}\label{eq:EM_based_SISO_channel_model}
	\begin{aligned}
  		h_{ji} = \frac{1}{2\sqrt{\eta}}\frac{e^{\jmath \kappa \Vert \mathbf{d}_{ji} \Vert}}{{ \Vert \mathbf{d}_{ji} \Vert}}
  		\mathbf{r}^T \mathbf{P}_{ji} \mathbf{O}_i \mathbf{t}.
  	\end{aligned}
\end{equation}
By leveraging the index mappings $(s,\mu,n) \leftrightarrow p$ and $(\sigma,\rho,\nu) \leftrightarrow q$,
the $p,q$-th element of $\mathbf{O}_i \in \mathbb{C}^{L \times L} $ is given by
\begin{equation}\label{eq:Go_pq}
	\begin{aligned}
  		[\mathbf{O}_i]_{p,q} = \delta_{s,\sigma}\,\delta_{n,\nu}\, D^{n}_{\mu \rho}(-\omega_i,-\beta_i,-\alpha_i).
  	\end{aligned}
\end{equation}
}}
The ${q},{p}$-th element of the matrix $\mathbf{{P}}_{ji} \in \mathbb{C}^{L \times L}$ is given by
\begin{equation}\label{eq:G_pq}
	\begin{aligned}
  		[\mathbf{{P}}_{ji}]_{{  q},{  p}} = 
  		a(n,\nu) 
  		\frac{\kappa}{2\sqrt{\eta}}
  		e^{\jmath ( \mu - \rho)\varphi_{ji}} 
  		b^{s,\mu,n}_{\sigma,\rho,\nu}(\theta_{ji}),
	\end{aligned}
\end{equation}
where
\begin{equation}\label{eq:anv}
	\begin{aligned}
  		a(n,\nu) = \frac{\jmath ^{\nu-n-1}\sqrt{(2n+1)(2\nu+1)}}{2},
  	\end{aligned}
\end{equation}

\begin{equation}
	\begin{aligned}\label{eq:b_theta}
  		b^{s,\mu,n}_{\sigma,\rho,\nu}(\theta_{ji}) = 
  		d^{n}_{1,\mu}(\theta_{ji}) & d^{\nu}_{1,\rho}(\theta_{ji}) + \\
  		& (-1)^{s+\sigma} d^{n}_{-1,\mu}(\theta_{ji}) d^{\nu}_{-1,\rho}(\theta_{ji})
  	\end{aligned}
\end{equation}

\end{prop}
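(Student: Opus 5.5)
We will obtain $h_{ji}$ by following the radiation–propagation–reception chain of Sec.~\ref{sec:EM_char_analysis} in the spherical mode domain, with each process acting as a linear map on the coefficient vector.

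\textbf{Step 1: rotate into the global frame.} Starting from (\ref{eq:elec_field_as_sum_spherical_func}), we express $\mathbf{E}_\mathrm{t}$ in the global frame $(x,y,z)$. The local frame $(x_\mathrm{t},y_\mathrm{t},z_\mathrm{t})$ is obtained from the global one by the Euler rotation $(\alpha_i,\beta_i,\omega_i)$, so passing from local to global is the inverse rotation. Applying the rotation property (\ref{eq:rotation_formula}) with angles $(-\omega_i,-\beta_i,-\alpha_i)$ gives
\[
\mathbf{F}^{(3)}_{s\rho n}=\sum_{\mu}D^n_{\mu\rho}(-\omega_i,-\beta_i,-\alpha_i)\,\mathbf{F}^{(3)}_{s\mu n}
\]
in global coordinates. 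Rotation preserves $s$ and $n$, which accounts for the factors $\delta_{s,\sigma}\delta_{n,\nu}$. Collecting terms, the global-frame coefficients are $\mathbf{O}_i\mathbf{t}$, with $\mathbf{O}_i$ as in (\ref{eq:Go_pq}). Care is needed here with the composition order of the inverse rotation (note that $D(-\omega,-\beta,-\alpha)$ inverts $D(\omega,\beta,\alpha)$ up to index transposition), and we will verify this against the index placement in (\ref{eq:Go_pq}).

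\textbf{Step 2: align with $\mathbf{d}_{ji}$ and translate.} We next rotate the global frame by $(\varphi_{ji},\theta_{ji},0)$ so that its $z$-axis points along $\mathbf{d}_{ji}$. We then translate by $A=\Vert\mathbf{d}_{ji}\Vert$ using (\ref{eq:translation_formula}), which produces standing waves $\mathbf{F}^{(1)}$. Finally we rotate back by $(0,-\theta_{ji},-\varphi_{ji})$ so that the axes are parallel to $z_\mathrm{r}$; this is where the assumption that $z$ is parallel to $z_\mathrm{r}$ is used. Writing $\mathbf{F}^{(1)}=\tfrac12(\mathbf{F}^{(3)}+\mathbf{F}^{(4)})$ and retaining only the inward part yields the $Q_{smn}$ in (\ref{eq:Er}). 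The product of the three operators, rotate–translate–rotate back, defines $\mathbf{P}_{ji}$.

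\textbf{Step 3: far-field asymptotics (main obstacle).} The key step is to obtain a closed form for this triple product. The full sum over intermediate modes $\mu'$ in the translated frame, with the exact coefficients $C^{sn(3)}_{\sigma\mu'\nu}(\kappa A)$, does not simplify. We therefore use the large-argument form of $C^{sn(3)}_{\sigma\mu\nu}(\kappa A)$ from \cite{Hansen_1988}, valid for $\kappa A\gg 1$, which is the LoS far-field regime. In this limit only $\mu'=\pm1$ survive, and
\[
C^{sn(3)}_{\sigma,\pm1,\nu}\approx \tfrac12\sqrt{(2n+1)(2\nu+1)}\,\jmath^{\nu-n-1}\,(\pm1)^{s+\sigma}\frac{e^{\jmath\kappa A}}{\kappa A}.
\]
Substituting this, the polar rotation contributes $d^n_{\pm1,\mu}(\theta_{ji})$ on the transmit side and $d^\nu_{\pm1,\rho}(\theta_{ji})$ on the receive side, after using symmetries of $d$ such as $d^\nu_{\rho,\pm1}(-\theta)=d^\nu_{\pm1,\rho}(\theta)$. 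The azimuthal factors combine into $e^{\jmath(\mu-\rho)\varphi_{ji}}$. Summing the two terms $\mu'=\pm1$ produces $b^{s,\mu,n}_{\sigma,\rho,\nu}(\theta_{ji})$ in (\ref{eq:b_theta}), with the sign $(-1)^{s+\sigma}$ arising from the $\mu'=-1$ term, together with $a(n,\nu)$. The factor $e^{\jmath\kappa A}/A$ is pulled out of $\mathbf{P}_{ji}$, and the remaining $\kappa$ and $\sqrt{\eta}$ constants are distributed between the prefactor and (\ref{eq:G_pq}).

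\textbf{Step 4: reception.} Finally we apply (\ref{eq:Er_receive_signal_form}) with the stacked reception vector $\mathbf{r}$, giving $w_j=\mathbf{r}^T\mathbf{P}_{ji}\mathbf{O}_i\mathbf{t}\,v_i$ times the scalar prefactor. Dividing by $v_i$ yields (\ref{eq:EM_based_SISO_channel_model}). The main bookkeeping risk throughout is the placement of the constants $\tfrac12$, $\kappa$ and $\sqrt{\eta}$, and the sign and ordering conventions of the Wigner $d$-functions.
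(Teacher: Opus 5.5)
This is essentially the paper's own proof in Appendix~\ref{sec:appendix_prof_prop_hij}: the inverse Euler rotation giving $\mathbf{t}'=\mathbf{O}_i\mathbf{t}$, the rotate--translate--rotate-back chain with $\mathbf{F}^{(1)}=\tfrac12(\mathbf{F}^{(3)}+\mathbf{F}^{(4)})$, the far-field form of $C^{sn(3)}_{\sigma\gamma\nu}$ keeping only $\gamma=\pm1$ together with $d^{\nu}_{\rho\gamma}(-\theta)=d^{\nu}_{\gamma\rho}(\theta)$, and reception through $\mathbf{r}$. The two bookkeeping points you flag are equally loose in the paper, and done carefully neither reproduces the statement verbatim: under (\ref{eq:D_func}) the exact inverse of $D(\omega,\beta,\alpha)$ is $D(-\alpha,-\beta,-\omega)$ (reverse and negate, so your ``up to transposition'' remark is false), which means (\ref{eq:Go_pq}) evaluated literally swaps the roles of $\alpha_i$ and $\omega_i$ and is right only if its arguments are read in first-to-last rotation order; and the prefactor $\frac{1}{2\sqrt{\eta}}\frac{e^{\jmath\kappa\Vert\mathbf{d}_{ji}\Vert}}{\Vert\mathbf{d}_{ji}\Vert}$ arises exactly once, so the additional $\frac{\kappa}{2\sqrt{\eta}}$ in (\ref{eq:G_pq}) is a normalization slip in the statement that no correct derivation can produce.
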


\begin{IEEEproof}
	Please refer to Appendix \ref{sec:appendix_prof_prop_hij}.
\end{IEEEproof}

\begin{remark}
	The EM-based channel gain in (\ref{eq:EM_based_SISO_channel_model}) is expressed as the product of two distinct factors. The first factor, ${e^{\jmath \kappa \|\mathbf{d}_{ji}\|}}/{\|\mathbf{d}_{ji}\|}$, represents a \emph{scalar} spherical wave that depends solely on the distance between the transmit and receive antennas. 
	The second factor, $\mathbf{r}^T \mathbf{P}_{ji} \mathbf{O}_i \mathbf{t}$, captures the \emph{vectorial} nature of EM propagation and incorporates the specific antenna types (reflected by $\mathbf{t}$ and $\mathbf{r}$), the orientation of the transmit antenna ($\mathbf{O}_i$), and the relative angular position between the transceivers ($\mathbf{P}_{ji}$). 
	Classical channel models~\cite{tseWirelessComm} typically retain only the scalar distance-dependent factor.
	Consequently, they fail to capture the variations in channel gain induced by antenna orientation and polarization.
	In contrast, the proposed EM-based channel model intrinsically accounts for these vectorial EM effects\textcolor{black}{, and moreover, cleanly isolates the antenna orientation and propagation environment into the distinct matrices $\mathbf{O}_i$ and $\mathbf{P}_{ji}$, respectively}, thereby enabling efficient system design in SRAS-enabled MIMO systems in the next section.
\end{remark}

\subsection{Extension to Multipath Channels}\label{sec:multipath_extension}

Before extending the LoS channel model in (\ref{eq:EM_based_SISO_channel_model}) to multipath scenarios, it is helpful to clarify the physical meaning of $\mathbf{P}_{ji}$. Its $({  q},{  p})$-th entry describes the contribution of the ${  p}$-th transmit spherical mode to the ${  q}$-th receive spherical mode after propagating through the channel. With this interpretation, Proposition~\ref{prop:hij} completely decouples the propagation environment (captured in $\mathbf{P}_{ji}$) from the antenna's EM properties and their orientations (handled by $\mathbf{t}$, $\mathbf{r}$, and $\mathbf{O}_i$). Consequently, different propagation scenarios can be modeled simply by plugging in a suitable $\mathbf{P}_{ji}$. We outline two representative cases below.

\subsubsection{Deterministic Multipath with Discrete Scatterers}
\textcolor{black}{We consider the single-bounce case, where each propagation path involves exactly one scattering event.}
Suppose the environment contains $N_{\mathrm{s}}$ well-resolved scatterers. The propagation matrix $\mathbf{P}_{ji}$ can then be constructed as a superposition of the LoS and scattered components:
\begin{equation}
    \mathbf{P}_{ji} = \mathbf{P}_{\mathrm{LoS},ji} + \sum_{n_{\mathrm{s}}=1}^{N_{\mathrm{s}}} \mathbf{P}_{{n_{\mathrm{s}}},ji}.
\end{equation}
Each $\mathbf{P}_{{n_{\mathrm{s}}},ji}$ admits a cascaded structure comprising three EM processes: the LoS propagation from the transmitter to the scatterer, described by the matrix $\mathbf{P}_{\mathrm{ts},l}$; scattering at the scatterer itself; and the LoS propagation from the scatterer to the receiver, described by the matrix $\mathbf{P}_{\mathrm{sr},l}$. Within the SVWE framework, the scattering behavior of a scatterer is fully characterized by its scattering matrix $\mathbf{S}_{l}$, which maps the incident spherical mode coefficients to the outgoing mode coefficients. \textcolor{black}{The entries of $\mathbf{S}_{l}$ are determined by the scatterer's material properties (permittivity, permeability, conductivity), geometry, and physical dimensions.} For scatterers of canonical shape such as spheres and ellipsoids, $\mathbf{S}_{l}$ admits closed forms~\cite{1445988,Gouesbet2010}, whereas for irregular scatterers it can be computed using numerical methods~\cite{1406246,6231655}. Consequently, we have
\begin{equation}
    \mathbf{P}_{{n_{\mathrm{s}}},ji} = \mathbf{P}_{\mathrm{ts},l} \mathbf{S}_{l} \mathbf{P}_{\mathrm{sr},l}.
\end{equation}

\subsubsection{Statistical Multipath Channels}
In stochastic multipath environments, the SVWE coefficients of the incident EM field at the receiver are random variables~\cite{5159555,9091906}. For a Rayleigh fading channel, it has been proven in \cite{5159555} that these coefficients follow a zero-mean circularly symmetric complex Gaussian distribution, whose covariance is determined by the power angular spectrum (PAS) of the propagation environment. The PAS can be obtained from either field measurements or standardized channel models. Consequently, $\mathbf{P}_{ji}$ is a random matrix with jointly Gaussian entries. A Rician fading channel further incorporates a deterministic LoS component, yielding
\begin{equation}
    \mathbf{P}_{ji} = \sqrt{\frac{K_{\mathrm{R}}}{K_{\mathrm{R}}+1}} \mathbf{P}_{\mathrm{LoS},ji} + \sqrt{\frac{1}{K_{\mathrm{R}}+1}} \tilde{\mathbf{P}}_{ji},
\end{equation}
where $K_{\mathrm{R}}$ is the Rician $K$-factor and $\tilde{\mathbf{P}}_{ji}$ is a zero-mean complex Gaussian random matrix.

The above cases illustrate that the proposed SVWE-based channel framework, by virtue of decoupling antenna properties from the propagation environment, can accommodate diverse channel models without altering the core formulation in (\ref{eq:EM_based_SISO_channel_model}). This generality is a key advantage of the SVWE-based approach over existing EM-based channel models.

\subsection{Impact of Antenna Movement versus Orientation on Achievable Rate}\label{subsec:impact_on_commRate}

The closed-form channel gain in Proposition~\ref{prop:hij} reveals that antenna position and orientation influence $h_{ji}$ through distinct mechanisms: position determines the scalar distance factor and the propagation matrix $\mathbf{P}_{ji}$, while orientation determines the orientation matrix $\mathbf{O}_i$. 
To assess their relative impact on communication performance, we compare the achievable rate variation induced by antenna displacement with that induced by antenna rotation through the following two experiments.
1) the antenna's orientation is fixed at $\alpha = \ang{0}$, $\beta = \ang{0}$, while its position is varied in the $x\mathcal{O}y$-plane over a large interval $[-20\lambda,20\lambda] \times [-20\lambda,20\lambda]$;
2) the transmit antenna's position is fixed at $\mathcal{O}$ while the angles $\alpha$ and $\beta$ are varied over $[0,\pi]$.
\textcolor{black}{As a representative example, both the transmit and receive antennas are modeled as half-wavelength dipoles.}
\textcolor{black}{The spin angle $\omega$ is fixed at $\ang{0}$, as it does not affect the radiation of the half-wavelength dipole antennas.}
The communication performance is evaluated by the achievable rate, defined by $\log_2 \left( 1 + P_\mathrm{t} \vert h_{ji} \vert^2\ / \sigma_{\mathrm{n}}^2 \right)$,
where $P_\mathrm{t} = 1$ W is the transmit power and $\sigma_{\mathrm{n}}^2 = -40$ dBm denotes the noise power.

The simulation results are shown in Fig.~\ref{fig:rate_comparison_pos_vs_ori}.
It can be observed that even when the transmit antenna is moved over a considerable range, the achievable rate exhibits only slight variation.
In contrast, adjusting the orientation induces significant fluctuations in the achievable rate.
Specifically, in the considered simulation scenario, the peak-to-peak variation of the achievable rate caused by antenna rotation is
$3.85$~bps/Hz, whereas that induced by antenna movement is only $0.81$~bps/Hz.
This result reveals that, compared to antenna movement, rotation exerts a far more pronounced influence on the achievable rate.

\begin{figure}[t]
	\centering
	\vspace{0pt}
	\hspace{-3pt}\includegraphics[width=0.5\textwidth]{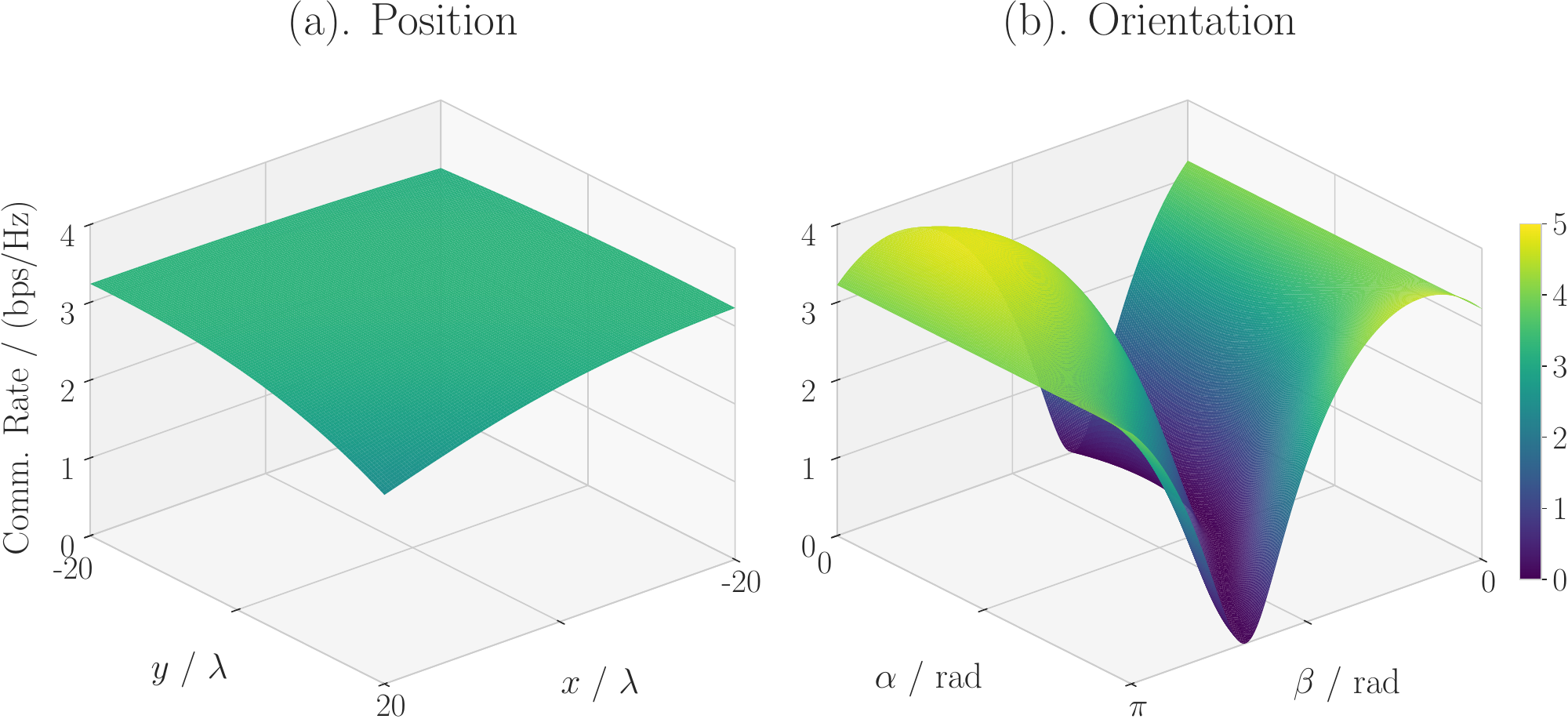}
	\caption{The variation of the point-to-point channel achievable rate for the transmit antenna with different orientations or positions.}
	\label{fig:rate_comparison_pos_vs_ori}
\end{figure}

This result can be understood from Proposition~\ref{prop:hij}. 
\textcolor{black}{Although the matrices $\mathbf{O}_i$ and $\mathbf{P}_{ji}$ appear symmetrically in the vectorial factor $\mathbf{r}^T \mathbf{P}_{ji} \mathbf{O}_i \mathbf{t}$, they differ fundamentally in the angular ranges they can access. The orientation matrix $\mathbf{O}_i$ is directly parameterized by $\alpha_i$, $\beta_i$, and $\omega_i$, \textcolor{black}{which cover the ranges $\alpha_i,\omega_i\in[0,2\pi]$ and $\beta_i\in[0,\pi]$, }thereby covering the full polarization range and driving the inner product from its maximum to near zero. In contrast, the propagation matrix $\mathbf{P}_{ji}$ depends on the geometric angles $\theta_{ji}$ and $\varphi_{ji}$, which are determined by the relative transceiver position. Under practical deployment conditions, antenna displacement changes these geometric angles only marginally, so the accessible dynamic range of $\mathbf{P}_{ji}$ is far narrower than that of $\mathbf{O}_i$. Consequently, despite their symmetric role in the inner product, orientation exerts a substantially stronger influence because $\mathbf{O}_i$ admits much wider angular coverage than $\mathbf{P}_{ji}$.}

\section{Sum-Rate Maximization in SRAS-Enabled MU-MIMO Systems}\label{sec:sum_rate_opt}

As shown in Sec. \ref{subsec:impact_on_commRate}, antenna orientation exerts a stronger influence on the achievable rate than antenna displacement. 
Motivated by this finding, this section investigates sum-rate maximization in an SRAS-enabled MU-MIMO system by jointly optimizing the orientation of all transmit antennas.
We formulate the problem directly on the special orthogonal group $\mathrm{SO}(3)^{N_{\mathrm{t}}}$, which is the natural domain of the orientation variables~\cite{Hall2003}, and develop a Riemannian gradient ascent method ~\cite{absil2009optimization} that respects this geometric structure to obtain a locally optimal solution to this highly non-convex problem. 

\subsection{Problem Formulation}

We consider the downlink of an SRAS-enabled $K$-user MIMO system, where the base station (BS) is equipped with $N_{\mathrm{t}}$ reconfigurable antenna elements and each user is equipped
  with $N_{\mathrm{u}}$ fixed antennas.
  \textcolor{black}{For the $k$-th user, let $\mathbf{d}_{ji}^{(k)} \in \mathbb{R}^{3 \times 1}$ denote the displacement vector from the $i$-th transmit antenna to its $j$-th receive antenna, for $i = 1,\dots,N_{\mathrm{t}}$, $j = 1,\dots,N_{\mathrm{u}}$, and $k=1,\dots, K$.
  }
  These relative positions are assumed known a priori.
  The received baseband signal $\mathbf{w}_k \in \mathbb{C}^{N_{\mathrm{u}} \times 1}$ at the $k$-th user is
  \begin{equation}
      \mathbf{w}_k  = \mathbf{H}_k \mathbf{F}_k \mathbf{v}_k + \sum_{u \neq k} \mathbf{H}_k \mathbf{F}_u \mathbf{v}_u + \mathbf{z},
  \end{equation}
  where $\mathbf{v}_k \in \mathbb{C}^{N_{\mathrm{u}} \times 1}$ is the transmitted symbol vector, $\mathbf{H}_k \in \mathbb{C}^{N_{\mathrm{u}} \times N_{\mathrm{t}}}$ is the EM-based channel matrix from the BS to the $k$-th user, which is constructed from Proposition~\ref{prop:hij} given $\{\mathbf{d}_{ji}^{(k)}\}$ and the orientations of the transmit antennas, and $\mathbf{F}_k \in \mathbb{C}^{N_{\mathrm{t}} \times N_{\mathrm{u}}}$ is
  the precoding matrix of the $k$-th user.

  To mitigate multiuser interference, we adopt the MMSE precoding scheme.
  Let $\mathbf{H} = [\mathbf{H}_1^T, \mathbf{H}_2^T, \dots, \mathbf{H}_K^T]^T \in \mathbb{C}^{K N_{\mathrm{u}} \times N_{\mathrm{t}}}$ be the overall channel matrix.
  The MMSE precoder is given by
  \begin{equation}\label{eq:W_MMSE}
      \mathbf{F}_{\mathrm{MMSE}} = \zeta \mathbf{H}^H (\mathbf{H}\mathbf{H}^H + \epsilon \mathbf{I}_{K N_{\mathrm{u}}})^{-1} \triangleq [\mathbf{F}_1, \mathbf{F}_2, \dots, \mathbf{F}_K],
  \end{equation}
  where $\epsilon = K \sigma_\mathrm{n}^2 / P$ is the regularization coefficient, $P$ is the power budget, and $\zeta = {\sqrt{P}}/{\Vert \mathbf{H}^H (\mathbf{H}\mathbf{H}^H +
  \epsilon \mathbf{I}_{K N_{\mathrm{u}}})^{-1} \Vert_F}$ enforces the power constraint.
  Defining $\mathbf{Q}_k = \mathbf{F}_k \mathbf{F}_k^H$ and $\mathbf{N}_k = (\sigma_{\mathrm{n}}^2 \mathbf{I}_{N_\mathrm{u}} + \sum_{j \neq k} \mathbf{H}_k \mathbf{Q}_j
  \mathbf{H}_k^H)^{-1}$, the achievable rate of the $k$-th user is given by~\cite{1261332}
  \begin{equation}\label{eq:R_k}
      R_k = \log_2 \det \bigl( \mathbf{I}_{N_\mathrm{u}} + \mathbf{H}_k \mathbf{Q}_k \mathbf{H}_k^H \mathbf{N}_k \bigr).
  \end{equation}

  Our objective is to maximize the sum-rate by optimizing the orientation of each transmit antenna.
   In the previous section, we used the angle tuple $(\alpha_i,\beta_i,\omega_i)$ to parameterize the orientation of the $i$-th transmit antenna.
    {However, these three angles are coupled, namely, a change in one alters the rotation axis of a subsequent angle. This coupling is an artifact of the angle-based parameterization rather than a physical constraint. 
    We can therefore bypass the coupling entirely by treating an orientation not as a triple of angles, but as a single element of the special orthogonal group $\mathrm{SO}(3)$, i.e., the set of all $3\times3$ matrices satisfying $\mathbf{U}\mathbf{U}^T = \mathbf{I}_3$ and $\det(\mathbf{U}) = 1$. 
    {\color{black}{Given a matrix $\mathbf{U} \in \mathrm{SO}(3)$, the corresponding rotation angles can be recovered as
    \begin{equation}\label{eq:U_2_algles}
        \begin{aligned}
            \alpha &= \operatorname{atan2}\bigl([\mathbf{U}]_{2,3},\,[\mathbf{U}]_{1,3}\bigr), \\
            \beta  &= \arccos\bigl([\mathbf{U}]_{3,3}\bigr), \\
            \omega &= \operatorname{atan2}\bigl([\mathbf{U}]_{3,2},\,-[\mathbf{U}]_{3,1}\bigr),
        \end{aligned}
    \end{equation}
    where $\operatorname{atan2}(\cdot,\cdot)$ denotes the two-argument arctangent.}}
    As a smooth manifold, $\mathrm{SO}(3)$ admits a well-defined gradient direction for optimizing antenna orientations. By formulating the optimization directly on the product manifold $\mathrm{SO}(3)^{N_{\mathrm{t}}}$, the three rotational degrees of freedom can be updated simultaneously, free from the artificial coupling of angle-based parameterizations. 
    Let $\mathbf{U}_i \in \mathrm{SO}(3)$ denote the rotation matrix representing the orientation of the $i$-th transmit antenna.
  We therefore formulate the sum-rate maximization problem as
  \begin{equation}\label{eq:opt_prob_manifold}
      \max_{\mathbf{U}_1,\dots,\mathbf{U}_{N_{\mathrm{t}}} \in \mathrm{SO}(3)} \; R = \sum_{k=1}^{K} R_k(\mathbf{U}_1,\dots,\mathbf{U}_{N_{\mathrm{t}}}).
  \end{equation}

\begin{remark}
	Problem (\ref{eq:opt_prob_manifold}) presents two main challenges.
	First, each rotation operation $\mathbf{U}_i$ enters the channel through the matrix $\mathbf{O}_i$, whose entries are the Wigner D-functions $D^n_{\mu m}$ in (\ref{eq:D_func})  evaluated at the rotation angles.
	These functions are highly nonlinear and oscillatory in the orientation angles, making the sum-rate a non-convex function of the orientations and precluding closed-form solutions.
	Second, due to multiuser interference, $R_k$ depends not only on $\mathbf{F}_k$ but also on all interfering precoders $\{\mathbf{F}_u\}_{u\neq k}$, coupling all users. \textcolor{black}{Moreover, each $\mathbf{F}_k$ is determined by the channel $\mathbf{H}$ through the MMSE solution, and $\mathbf{H}$ itself depends nonlinearly on the orientation matrices $\mathbf{O}_i$ (and hence on $\mathbf{U}_i$), further tightening the coupling among users and antennas.}
	These two features make the problem highly non-convex and tightly coupled.
\end{remark}

\subsection{Riemannian Gradient Ascent on $\mathrm{SO}(3)$}

To solve problem (\ref{eq:opt_prob_manifold}), we employ Riemannian gradient ascent.
The key idea is that $\mathrm{SO}(3)$ is a \emph{smooth manifold}, which means that although it is globally curved, each element $\mathbf{U}$ lies in a neighborhood that resembles a flat three-dimensional space, called the \emph{tangent space} $T_{\mathbf{U}}\mathrm{SO}(3)$~\cite{Hall2003}.
A gradient direction can therefore be defined in this tangent space, and the optimization proceeds by taking small steps along the gradient and projecting back onto the manifold~\cite{absil2009optimization}.
\textcolor{black}{It is worth emphasizing that the group element $\mathbf{U}_i$ encapsulates all three angles $(\alpha_i,\beta_i,\omega_i)$ in a single geometric object. The Riemannian gradient of the sum-rate with respect to $\mathbf{U}_i$ therefore captures the steepest-ascent direction jointly over all three angles.}

\textcolor{black}{The sum-rate $R$ is a scalar function defined on the product manifold $\mathrm{SO}(3)^{N_{\mathrm{t}}}$. Its Riemannian gradient with respect to $\mathbf{U}_i$ is therefore a tangent vector in the tangent space at $\mathbf{U}_i$, denoted by $T_{\mathbf{U}_i}\mathrm{SO}(3)$.
A favorable property of $\mathrm{SO}(3)$ is that the Riemannian gradient of the sum-rate $R$ at $\mathbf{U}_i$ can be written as
\begin{equation}\label{eq:grad_Ri}
	\begin{aligned}
		\operatorname{grad}_{\mathbf{U}_i} R = \mathbf{U}_i  {\mathbf{G}}_i,
	\end{aligned}
\end{equation}
where $\mathbf{G}_i$ is a $3\times 3$ skew-symmetric matrix ${\mathbf{G}}_i$ encodes all the directional information of the gradient. The following proposition provides the explicit expression of ${\mathbf{G}}_i$ in terms of the system matrices $\mathbf{H}_k$, $\mathbf{Q}_k$, and $\mathbf{N}_k$ that constitute $R$.
}

\begin{prop}\label{prop:gradient_expansion}
\color{black}{
	The skew-symmetric matrix ${\mathbf{G}}_i$ in (\ref{eq:grad_Ri}) can be expanded as
	\begin{equation}\label{eq:hat_gi}
		\begin{aligned}
			{\mathbf{G}}_i = g_{i,x}\mathbf{G}_x + g_{i,y}\mathbf{G}_y + g_{i,z}\mathbf{G}_z,
		\end{aligned}
	\end{equation}
	where $\mathbf{G}_x$, $\mathbf{G}_y$, and $\mathbf{G}_z$ form the standard basis of $3\times 3$ skew-symmetric matrices. Each coefficient $g_{i,a}$ ($a\in\{x,y,z\}$) gives the rate of change of the sum-rate under an infinitesimal rotation of the $i$-th antenna about its body $a$-axis, and is given by
	\begin{equation}\label{eq:g_ia_expression}
		\begin{aligned}
			g_{i,a} = \sum_{k=1}^{K} \mathrm{Tr}\left( \mathbf{B}_k \bigl( \tilde{\mathbf{C}}_{kk}^{(a,i)} - \mathbf{A}_k \sum_{j\neq k} \tilde{\mathbf{C}}_{kj}^{(a,i)} \bigr) \mathbf{N}_k \right),
		\end{aligned}
	\end{equation}
	where $\mathbf{A}_k = \mathbf{H}_k \mathbf{Q}_k \mathbf{H}_k^H \mathbf{N}_k$, $\mathbf{B}_k = (\mathbf{I} + \mathbf{A}_k)^{-1}$, and $\tilde{\mathbf{C}}_{kj}^{(a,i)}$ involves the perturbation of the channel matrix induced by the orientation derivative of $\mathbf{O}_i$ about the $a$-axis, as derived in Appendix~\ref{sec:appendix_riemannian_gradient}.}
\end{prop}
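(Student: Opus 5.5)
The proof has two parts. First, show that the Riemannian gradient on $\mathrm{SO}(3)$ reduces to three directional derivatives along body-frame rotations. Second, compute each directional derivative by matrix calculus applied to $R_k$ in (\ref{eq:R_k}).

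\textbf{Reduction to directional derivatives.} We use the bi-invariant metric $\langle \mathbf{X},\mathbf{Y}\rangle=\frac{1}{2}\mathrm{Tr}(\mathbf{X}^T\mathbf{Y})$ restricted from $\mathbb{R}^{3\times 3}$. The tangent space is $T_{\mathbf{U}_i}\mathrm{SO}(3)=\{\mathbf{U}_i\boldsymbol{\Omega}:\boldsymbol{\Omega}^T=-\boldsymbol{\Omega}\}$. The Riemannian gradient is the projection of the Euclidean gradient onto this space, which gives (\ref{eq:grad_Ri}) with $\mathbf{G}_i$ skew-symmetric. Write $\mathbf{G}_i$ in the orthonormal basis $\{\mathbf{G}_x,\mathbf{G}_y,\mathbf{G}_z\}$, where $\mathbf{G}_a$ generates rotation about the $a$-axis. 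By the defining relation $\langle \operatorname{grad} R,\mathbf{U}_i\mathbf{G}_a\rangle=\mathrm{D}R[\mathbf{U}_i\mathbf{G}_a]$, the coefficient $g_{i,a}$ equals
\[
\frac{d}{dt}R\bigl(\mathbf{U}_i e^{t\mathbf{G}_a}\bigr)\Big|_{t=0},
\]
that is, the rate of change of $R$ under an infinitesimal body-axis rotation. This establishes (\ref{eq:hat_gi}) and the stated interpretation of $g_{i,a}$.

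\textbf{Derivative of the channel.} We propagate the body-axis perturbation through the channel. By the rotation property (\ref{eq:rotation_formula}), the perturbation $\mathbf{U}_i e^{t\mathbf{G}_a}$ corresponds to $\mathbf{O}_i(t)=\mathbf{O}_i\,\mathbf{D}(e^{t\mathbf{G}_a})$, where $\mathbf{D}(\cdot)$ is block diagonal in $(s,n)$ with Wigner blocks. Its derivative at $t=0$ is therefore $\mathbf{O}_i\mathbf{J}_a$, where $\mathbf{J}_a$ is block diagonal with the standard angular-momentum generator matrices: $\mathbf{J}_z=\jmath\,\mathrm{diag}(\mu)$, and $\mathbf{J}_x,\mathbf{J}_y$ are built from the ladder coefficients $\sqrt{n(n+1)-\mu(\mu\pm1)}$. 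By Proposition~\ref{prop:hij}, only column $i$ of each $\mathbf{H}_k$ changes, so
\[
\dot{\mathbf{H}}_k^{(a,i)}=\dot{\mathbf{h}}_k^{(a,i)}\mathbf{e}_i^T,
\]
where the entries of $\dot{\mathbf{h}}_k^{(a,i)}$ are $\frac{1}{2\sqrt{\eta}}\frac{e^{\jmath\kappa\|\mathbf{d}\|}}{\|\mathbf{d}\|}\mathbf{r}^T\mathbf{P}_{ji}\mathbf{O}_i\mathbf{J}_a\mathbf{t}$.

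\textbf{Derivative of the sum-rate.} Applying $d\log\det\mathbf{X}=\mathrm{Tr}(\mathbf{X}^{-1}d\mathbf{X})$ to $R_k=\log_2\det(\mathbf{I}+\mathbf{A}_k)$ gives $\mathrm{Tr}(\mathbf{B}_k\,d\mathbf{A}_k)$, up to the constant $1/\ln 2$, which we absorb into the definitions. The product rule and $d\mathbf{N}_k=-\mathbf{N}_k\,d(\mathbf{N}_k^{-1})\,\mathbf{N}_k$ then give
\[
d\mathbf{A}_k=d(\mathbf{H}_k\mathbf{Q}_k\mathbf{H}_k^H)\,\mathbf{N}_k-\mathbf{A}_k\sum_{j\neq k}d(\mathbf{H}_k\mathbf{Q}_j\mathbf{H}_k^H)\,\mathbf{N}_k.
\]
We define
\[
\tilde{\mathbf{C}}_{kj}^{(a,i)}=\frac{d}{dt}\bigl(\mathbf{H}_k\mathbf{Q}_j\mathbf{H}_k^H\bigr),
\]
which collects the contributions from $\dot{\mathbf{H}}_k$ and, through $\dot{\mathbf{H}}$, from the MMSE precoders $\dot{\mathbf{Q}}_j$. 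With this definition, summing over $k$ yields (\ref{eq:g_ia_expression}). The result is real: each $R_k$ is real, and equivalently the expression can be written as $2\mathrm{Re}$ of a trace.

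\textbf{Main obstacle.} The hard step is the contribution of $\dot{\mathbf{Q}}_j$, since (\ref{eq:W_MMSE}) depends on $\mathbf{H}$ through both an inverse and the normalization $\zeta$. We would differentiate these using $d(\mathbf{M}^{-1})=-\mathbf{M}^{-1}d\mathbf{M}\,\mathbf{M}^{-1}$ with $\mathbf{M}=\mathbf{H}\mathbf{H}^H+\epsilon\mathbf{I}$, together with $d\zeta=-\zeta^3\,\mathrm{Re}\,\mathrm{Tr}(\tilde{\mathbf{F}}^H d\tilde{\mathbf{F}})/P$, where $\tilde{\mathbf{F}}$ is the unnormalized precoder. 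All of this is absorbed into $\tilde{\mathbf{C}}$, which is why the statement leaves its full form to Appendix~\ref{sec:appendix_riemannian_gradient}. A secondary check is the sign and ordering convention in $\mathbf{O}_i\mathbf{J}_a$ versus $\mathbf{J}_a\mathbf{O}_i$. For body-frame rotations, right multiplication of $\mathbf{U}_i$ should correspond to right multiplication of $\mathbf{O}_i$, given the negative-angle convention in (\ref{eq:Go_pq}); we would verify this on the $z$-axis case, where $\mathbf{J}_z$ is diagonal.
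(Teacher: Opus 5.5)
Your proposal is correct and follows the paper's Appendix~\ref{sec:appendix_riemannian_gradient} step for step: the tangent-space form $\mathbf{U}_i\mathbf{G}_i$, $g_{i,a}$ as directional derivatives along $\mathbf{U}_i e^{t\mathbf{G}_a}$, $\dot{\mathbf{O}}_i=\mathbf{O}_i\,\mathrm{d}\rho(\mathbf{G}_a)$, the column-$i$ channel perturbation, the MMSE-precoder differentiation, and the log-det/product rule; your orthonormal-basis justification of $g_{i,a}$ and the explicit $1/\ln 2$ are slightly more careful than the paper. On your flagged convention check, right multiplication is correct because $\mathbf{O}_i$ is the coefficient matrix of (\ref{eq:rotation_formula}) for $\mathbf{U}_i^{-1}$, so $\mathbf{U}\mapsto\mathbf{O}$ is a homomorphism (derive this from the inverse rotation rather than reading (\ref{eq:Go_pq}) literally, which puts the $\omega_i$ phase on the row index), but (\ref{eq:rotation_formula})--(\ref{eq:D_func}) give $\mathbf{O}(e^{t\mathbf{G}_z})=\mathrm{diag}(e^{-\jmath m t})$, so the generator is $-\jmath\,\mathrm{diag}(m)$ as in the paper, and your $+\jmath\,\mathrm{diag}(\mu)$ would flip the sign of $g_{i,z}$.
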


\begin{IEEEproof}
	Please refer to Appendix~\ref{sec:appendix_riemannian_gradient}.
\end{IEEEproof}

Having obtained the gradient direction, we must take a finite step along it while staying on the manifold.
In Euclidean space, one can simply move along a straight line, but on a curved manifold a straight displacement in the gradient direction would leave the manifold, and the resulting orientation would no longer be a valid rotation.
To stay on the manifold, we need a map that takes a tangent vector and returns a point on the manifold following the natural curved path in that direction.
Such a map is called the \emph{exponential map}~\cite{Hall2003}.
While the exponential map is generally difficult to compute in closed form, for $\mathrm{SO}(3)$ it admits the classical Rodrigues' formula
\begin{equation}
	\begin{aligned}
  		\exp({{\mathbf{G}}_i}) = \mathbf{I}_3 + \frac{\sin\theta}{\theta}\,{{\mathbf{G}}_i} + \frac{1-\cos\theta}{\theta^2}\,{\mathbf{G}}_i^2,
  	\end{aligned}
\end{equation}
where $\theta = \sqrt{g_{i,x}^2+g_{i,y}^2+g_{i,z}^2}$.
With Rodrigues' formula in hand, the update on the manifold takes the form
\begin{equation}\label{eq:update_rule}
	\begin{aligned}
  		\mathbf{U}^{(t+1)} = \mathbf{U}^{(t)} \cdot \exp\bigl(\eta{\mathbf{G}}_i^{(t)}\bigr),
  	\end{aligned}
\end{equation}
where $\eta>0$ is a fixed step size, ${\mathbf{G}}_i^{(t)}$ denotes the gradient direction at iteration $t$.
Algorithm~\ref{alg:Riemannian} summarizes the overall procedure.

\begin{algorithm}[t]
	\caption{\color{black}{Riemannian Gradient Ascent for Problem (\ref{eq:opt_prob_manifold}})}\label{alg:Riemannian}
	\begin{algorithmic}[1]		
		\REQUIRE $N_{\mathrm{t}}$, $N_{\mathrm{u}}$, $K$, $P$, $\sigma_\mathrm{n}^2$, $\{\mathbf{d}_{ji}^{(k)}\}$, $\eta$, $\varepsilon$, $t_{\mathrm{max}}$
		\ENSURE A locally optimal solution $\{\mathbf{U}_i\}_{i=1}^{N_{\mathrm{t}}}$ and the corresponding Euler angles $\{(\alpha_i,\beta_i,\omega_i)\}_{i=1}^{N_{\mathrm{t}}}$.
		
		\STATE Initialise $\mathbf{U}_i^{(0)} \gets \mathbf{I}_3$ for $i=1,\dots,N_{\mathrm{t}}$.
		
		\STATE Compute $\mathbf{H}$, $\mathbf{F}_{\mathrm{MMSE}}$, $R^{(0)}$ at $\{\mathbf{U}_i^{(0)}\}$.
		
		\WHILE {$\Delta_{R} \ge \varepsilon$ and $t \le t_{\mathrm{max}}$} 
		
		\STATE Compute ${\mathbf{G}}_i^{(t)} = \sum_{a\in\{x,y,z\}} g_{i,a}^{(t)} \mathbf{G}_a$ via (\ref{eq:g_ia_expression}).
		
		\STATE Update $\mathbf{U}_i^{(t+1)} \gets \mathbf{U}_i^{(t)} \cdot \exp(\eta{\mathbf{G}}_i^{(t)})$ for all $i$.
		\STATE Calculate $R^{(t+1)}$.
		\STATE Update $\Delta_{R} \gets \vert R^{(t+1)} - R^{(t)} \vert$.
		\STATE $t \gets  t + 1$.
		
		\ENDWHILE
		
		\STATE Recover angle tuple $(\alpha_i,\beta_i,\omega_i)$ from $\mathbf{U}_i$ via (\ref{eq:U_2_algles}).
	\end{algorithmic}
\end{algorithm}

\section{Simulation and Measurement Results}\label{sec:simulation_experiment}

In this section, we validate the accuracy of the proposed EM-based channel model and evaluate the sum-rate performance of the proposed orientation design method via numerical simulations and experimental measurements.

\subsection{Validation of the EM-based Channel Model via Simulations}\label{subsec:numerical_simulation}

\begin{figure}[ht]
	\centering
	\vspace{5pt}
	\hspace{-10pt}\includegraphics[trim=0 0 0 0 clip, width=0.46\textwidth]{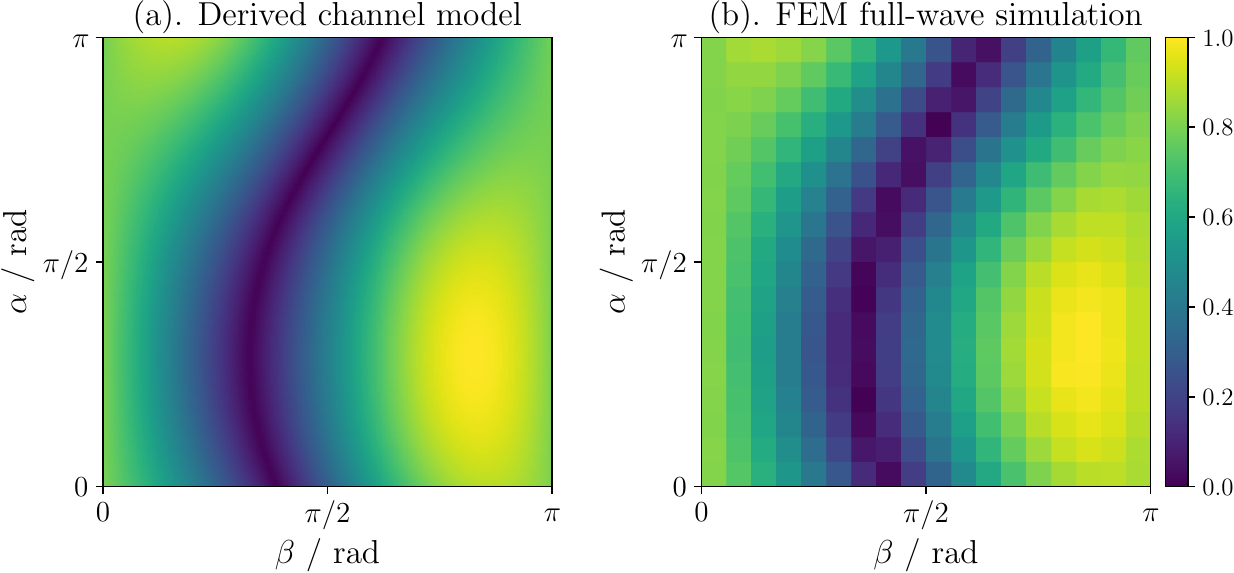}
	\vspace{0mm}
	\caption{The variation of the normalized channel gain with respect to the azimuth angle $\alpha$ and elevation angle $\beta$ of the transmit antenna.}
	\label{fig:channel_gain_SISO}
	
	\vspace{4pt}
	
	\centering
	\vspace{5pt}
	\hspace{-10pt}\includegraphics[trim=0 0 0 0 clip, width=0.46\textwidth]{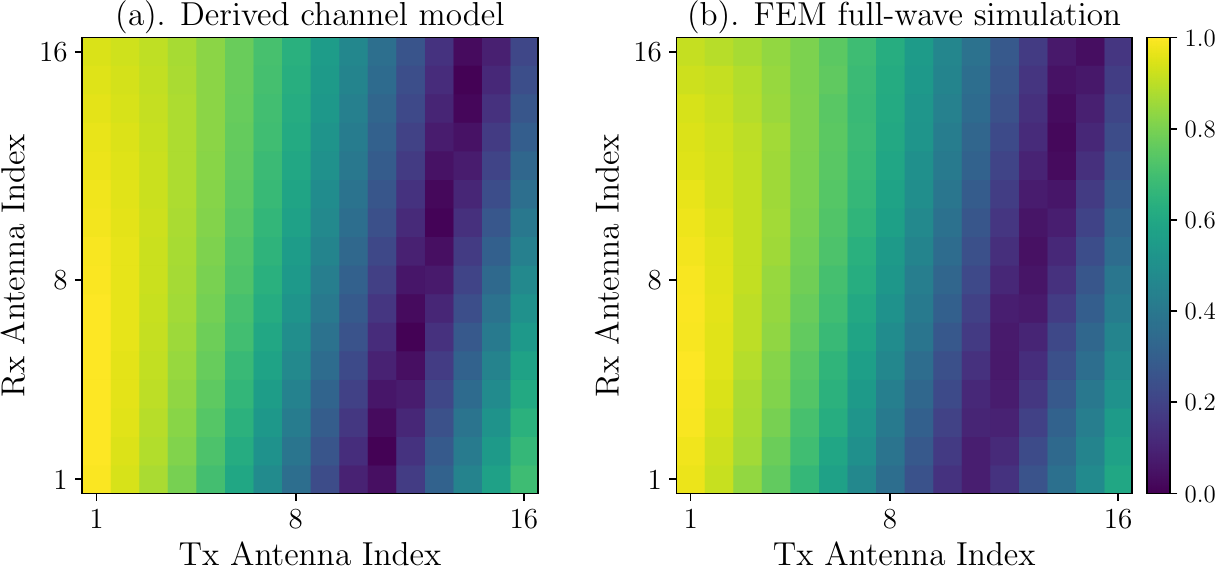}
	\vspace{0mm}
	\caption{The normalized channel gain of the considered SRAS-enabled MIMO system.}
	\label{fig:channel_gain_MIMO}
\end{figure}

\begin{figure*}[t]
\centering
\vspace{2pt}
\begin{minipage}[t]{0.45\linewidth}
    \vspace{0pt}
    \centering
    \includegraphics[width=\linewidth]{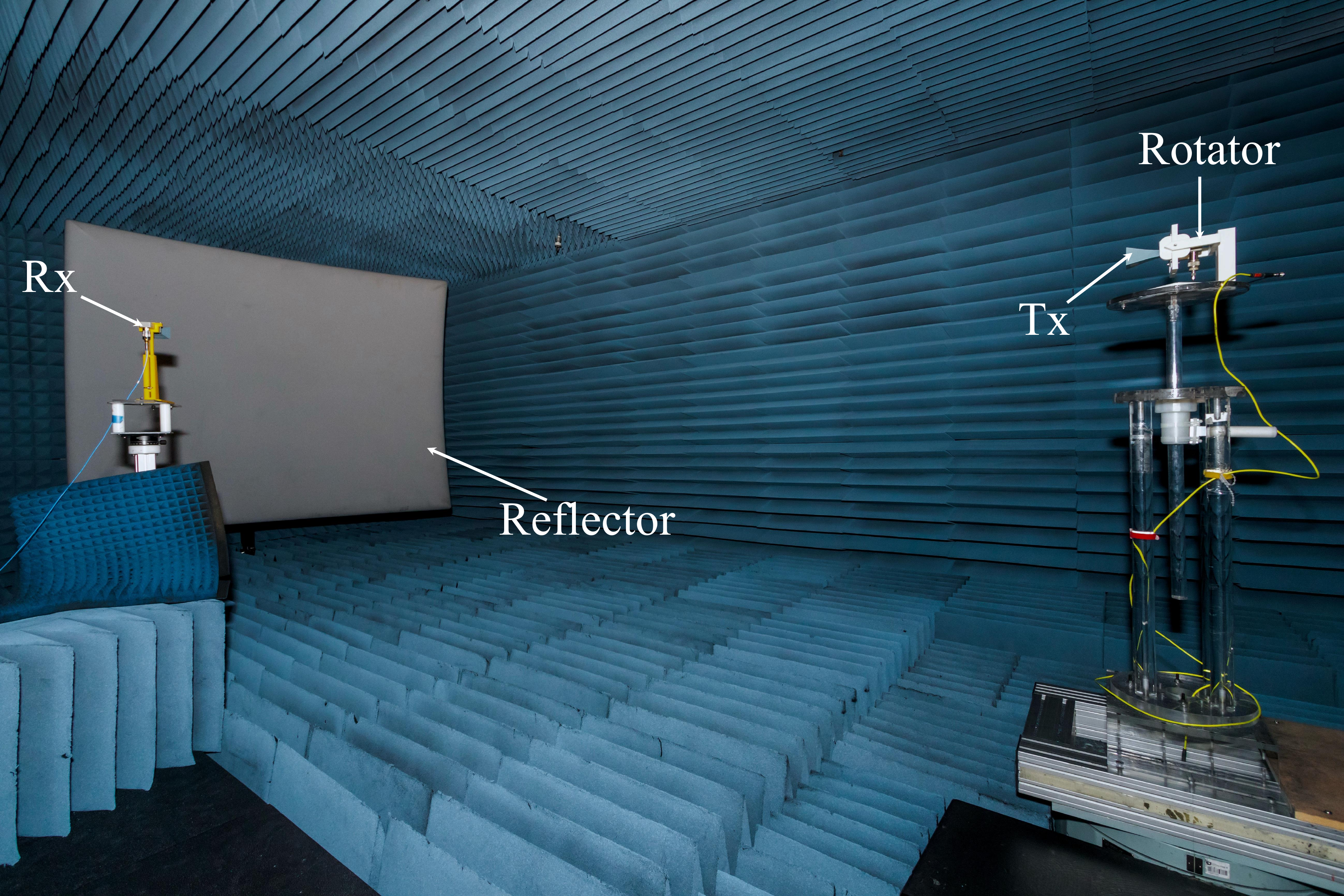}
    \captionsetup{justification=centering}
    \subcaption{}
    \label{fig:exp_all}
\end{minipage}%
\hspace{12pt}
\begin{minipage}[t]{0.32\linewidth}
    \vspace{0pt}
    \centering
    \begin{subfigure}[b]{0.44\linewidth}
        \captionsetup{justification=centering}
        \includegraphics[width=\linewidth]{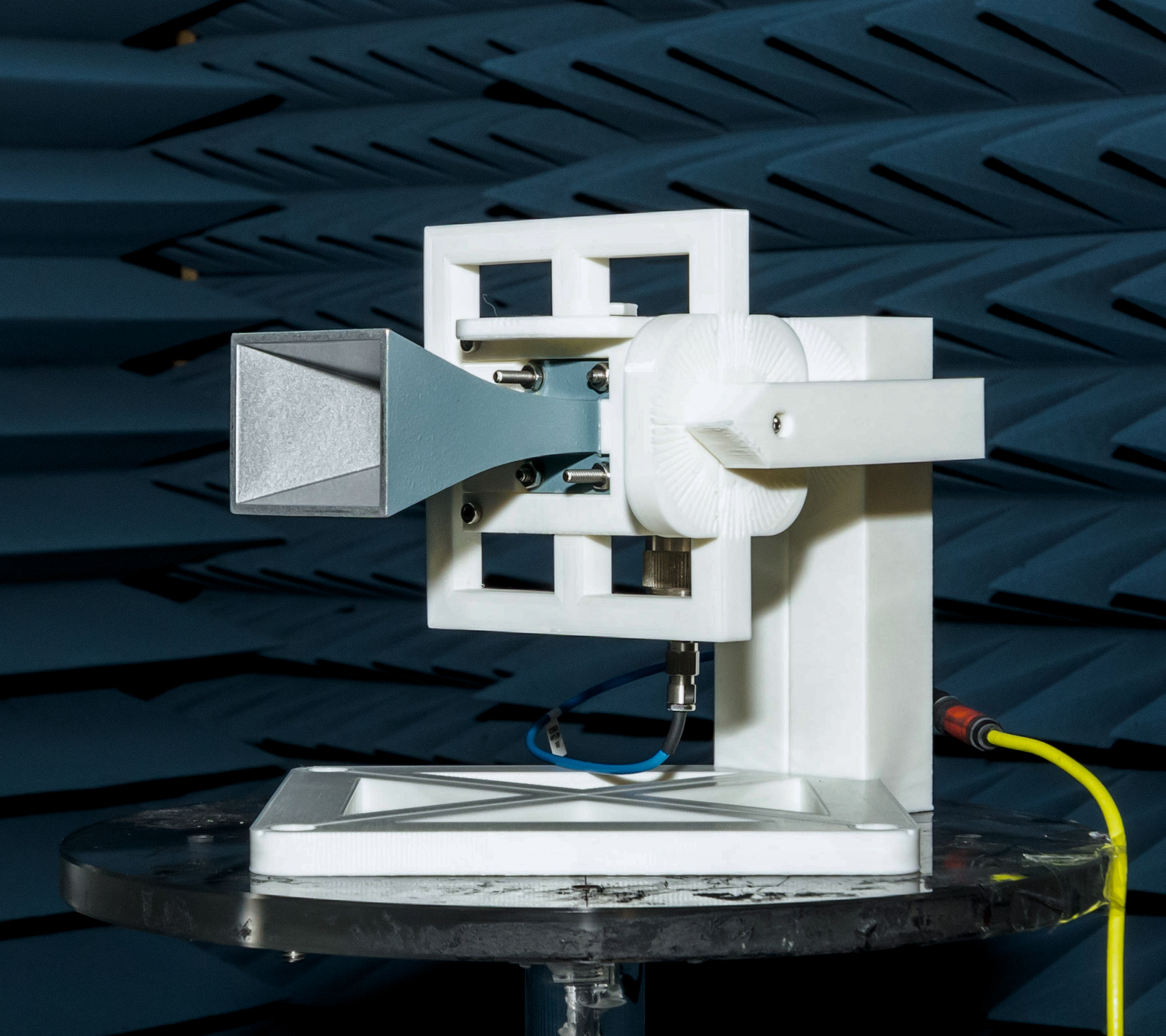}
        \subcaption{}
        \label{fig:Tx_E}
    \end{subfigure}%
    \hfill
    \begin{subfigure}[b]{0.44\linewidth}
        \captionsetup{justification=centering}
        \includegraphics[width=\linewidth]{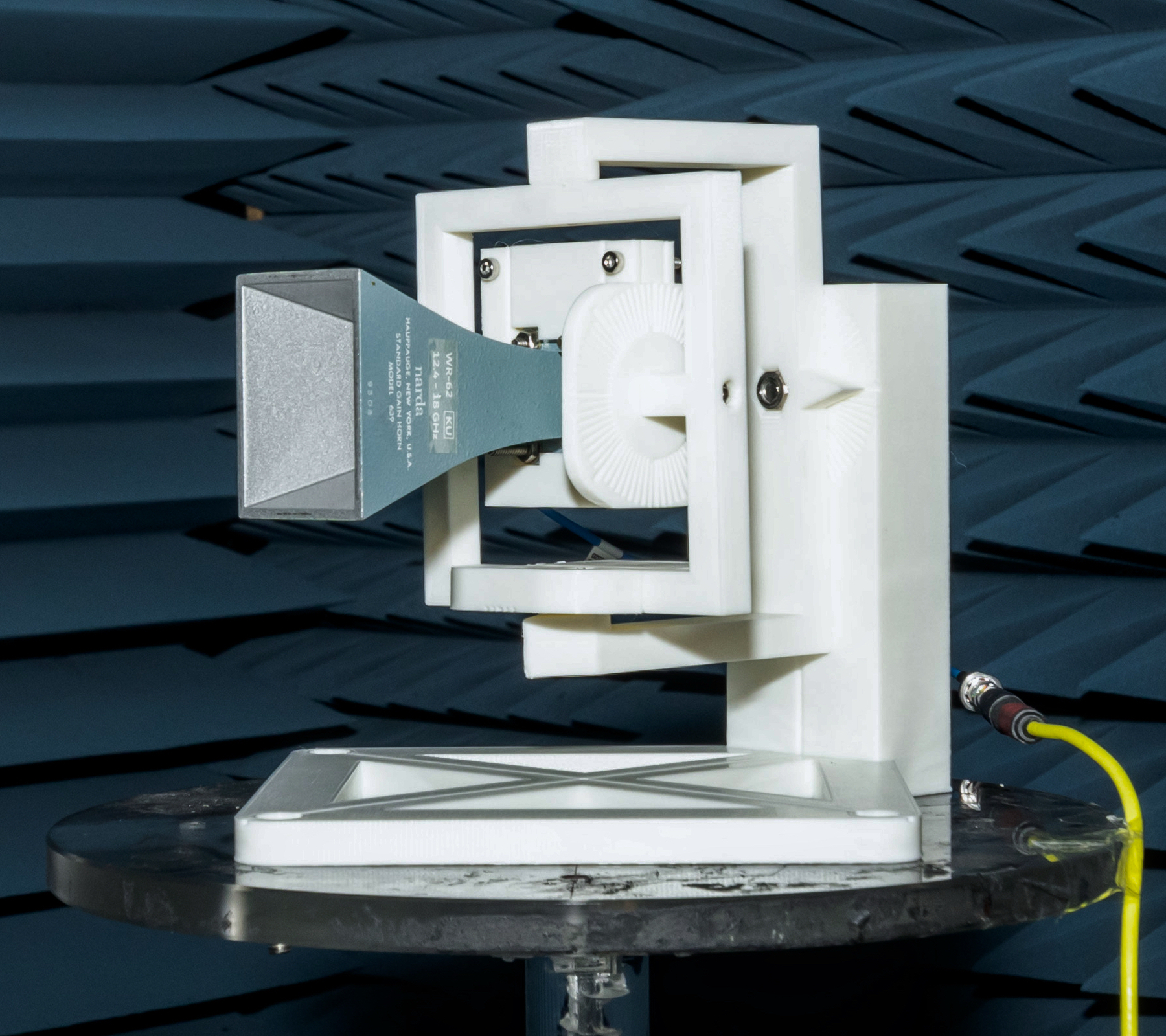}
        \subcaption{}
        \label{fig:Tx_H}
    \end{subfigure}

    \vspace{-0.3pt}

    \begin{subfigure}[b]{\linewidth}
        \centering
        \includegraphics[width=\linewidth]{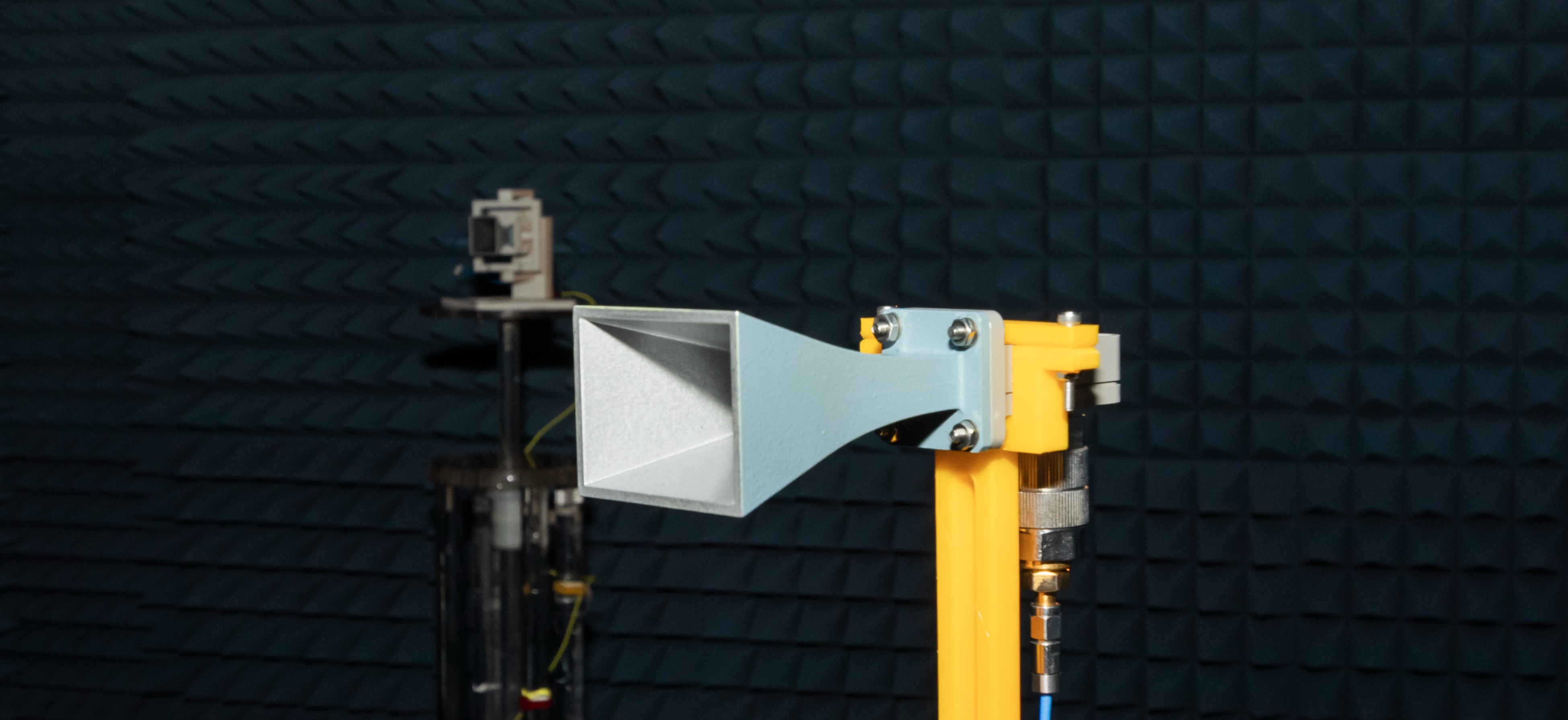}
        \captionsetup{justification=centering}
        \subcaption{}
        \label{fig:Rx}
    \end{subfigure}
\end{minipage}

\caption{(a) The overview of the measurement setup; (b) the transmit antenna orientation corresponding to $\omega = \ang{0}$; (c) the transmit antenna orientation corresponding to $\omega = \ang{90}$; (d) the fixed receive antenna.}
\label{fig:complex_layout}

\end{figure*}

We first validate the capability of the proposed EM-based channel model in capturing the impact of antenna orientation on the channel gain via numerical simulations.
As a representative example, we consider a half-wavelength dipole antenna for both the transmit and receive antennas.
Due to the inherent symmetry of the dipole, its spin angle $\omega$ does not affect the radiation characteristic, thus we focus on the azimuth angle $\alpha$ and elevation angle $\beta$.
The operating wavelength is $\lambda=0.1$~m, the transmit antenna is placed at the origin $\mathcal{O}$, and the receive antenna is located at $\mathcal{P}=(50\lambda,40\lambda,50\lambda)$.

\begin{figure}[t]
	\vspace{0pt}
	\centering
	\hspace{-10pt}\includegraphics[width=0.37\textwidth]{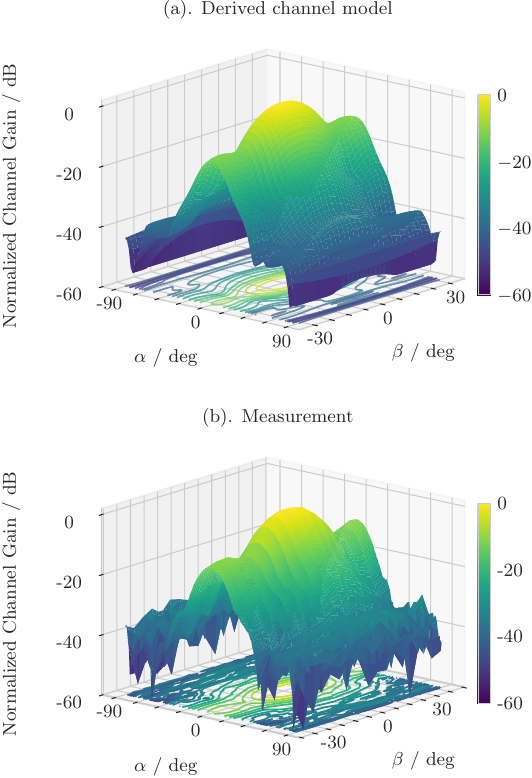}
	\vspace{5pt}
	\caption{The variation of the normalized channel gain with respect to $\alpha$ and $\beta$ of the transmit antenna when $\omega=\ang{0}$.}
	\label{fig:Eplane_sim_vs_mea}
\end{figure}

\begin{figure}[t]
	\vspace{0pt}
	\centering
	\hspace{-10pt}\includegraphics[width=0.37\textwidth]{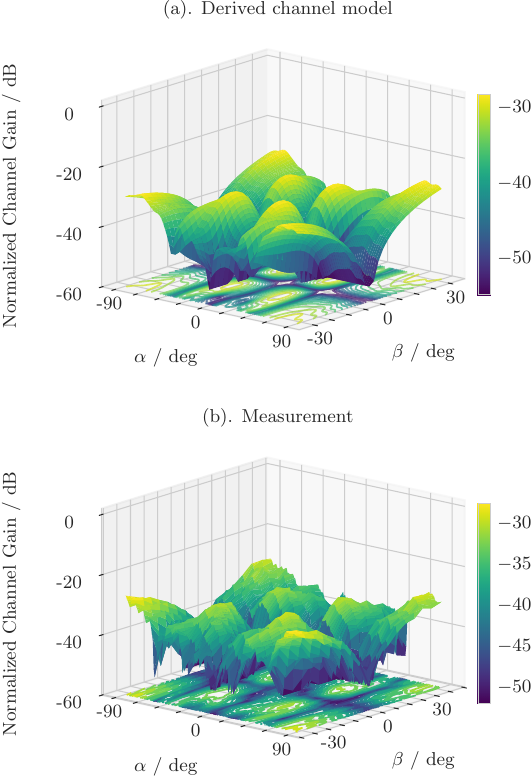}
	\vspace{5pt}
	\caption{The variation of the normalized channel gain with respect to $\alpha$ and $\beta$ of the transmit antenna when $\omega=\ang{90}$.}
	\label{fig:Hplane_sim_vs_mea}
\end{figure}

Fig.~\ref{fig:channel_gain_SISO} shows the point-to-point channel gain computed from (\ref{eq:EM_based_SISO_channel_model}) for different values of $\alpha$ and $\beta$.
For comparison, we employ the finite element method (FEM), a widely used full-wave EM simulation technique. \textcolor{black}{We implement the FEM simulations in Ansys HFSS, a commercial EM solver that inherently captures vectorial EM effects, including polarization mismatch, and thus serves as an independent validation reference.} The corresponding channel gain is computed at every $\ang{10}$ increment in both $\alpha$ and $\beta$.
The normalized mean square error (NMSE) between the two results is approximately $-30.75$~dB.

We then extend the validation to a $16 \times 16$ SRAS-enabled MIMO system, where both the transmitter and receiver are equipped with half-wavelength-spaced uniform linear arrays (ULAs).
Both arrays are aligned along the positive $x$-axis, with their first elements located at $\mathcal{O}$ and $\mathcal{P}$, respectively.
To highlight the reconfigurability of the transmit array, we set the orientation of the $i$-th transmit antenna to $\beta_i = 5i^{\circ}$ and $\alpha_i = \ang{90}$.
Fig.~\ref{fig:channel_gain_MIMO} shows the resulting EM-based channel matrix.
The corresponding FEM result was obtained by computing the channel gain between every transmit-receive antenna pair.
The NMSE between the two results is approximately $-24.13$~dB.

Both Fig.~\ref{fig:channel_gain_SISO} and Fig.~\ref{fig:channel_gain_MIMO} show excellent agreement between the analytically derived channel gain and the FEM-based simulation results, thereby validating the accuracy of the developed EM-based channel model for SRAS-enabled MIMO systems.


\subsection{Validation of the EM-based Channel Model via Measurements}\label{subsec:measurement}

To further evaluate the accuracy of the proposed EM-based channel model, in this section, we present measurement results of channel gain between a rotatable transmit antenna and a fixed receive antenna.
Fig. \ref{fig:exp_all} provides an overview of the measurement setup in a microwave compact antenna test range anechoic chamber.
\textcolor{black}{Both transmit and receive antennas are Ku-band linearly polarized horn antennas (Narda WR-62) with an aperture dimension of {57 }{mm} $\times$ {43 }{mm}, operated at $18$~GHz.}
The transmit antenna is mounted on a 3D-printed rotator to enable independent control of its azimuth angle $\alpha$ and elevation angle $\beta$.
Given the highly directional radiation pattern of the horn antenna, we restrict the measurement range to $\alpha \in [\ang{-90},\ang{90}]$ and $\beta \in [\ang{-30},\ang{30}]$. \textcolor{black}{The chosen ranges span several times the half-power beamwidth of the horn antenna, which is approximately \ang{23}.} Outside this angular region, the channel gain is negligible and dominated by noise.
To further investigate the impact of spin angle $\omega$ on channel gain, the transmit antenna is oriented as illustrated in Fig. \ref{fig:Tx_E} and \ref{fig:Tx_H}, corresponding to $\omega=\ang{0}$ and $\omega=\ang{90}$, respectively.
The receive antenna, as shown in Fig. \ref{fig:Rx}, is fixed at the same height as the transmit antenna.
\textcolor{black}{A parabolic reflector placed {4.8 }{m} from the transmitter and {1.2 }{m} from the receiver converts the spherical wave from the transmit antenna into a plane wave incident on the receive antenna, thereby creating far-field propagation conditions (the far-field distance $2D^2/\lambda \approx {0.6}${ m} is significantly smaller than both path segments). Since the reflector preserves polarization, any measured polarization mismatch originates solely from the relative orientation between the antennas.}

Figs.~\ref{fig:Eplane_sim_vs_mea} and \ref{fig:Hplane_sim_vs_mea} compare the channel gain predicted by the EM-based model to the corresponding measurement results.
First, it can be observed that the theoretical and measured channel gains show good agreement across different transmit antenna orientations.
Furthermore, by comparing the results in Figs.~\ref{fig:Eplane_sim_vs_mea} and \ref{fig:Hplane_sim_vs_mea}, we observe that the channel gain decreases significantly when the transmit antenna's spin angle $\omega$ is changed from $\ang{0}$ to $\ang{90}$.
This phenomenon stems from the difference in polarization alignment between the transmit and receive antennas.
When $\omega = \ang{0}$, the antennas are polarization-matched.
In this case, variations in channel gain induced by adjusting $\alpha$ and $\beta$ are governed mainly by the radiation pattern of the horn antenna.
Hence, the channel gain observed in Fig.~\ref{fig:Eplane_sim_vs_mea} exhibits distinct directional characteristics.
In contrast, when $\omega = \ang{90}$, the antennas become polarization-mismatched.
Consequently, the channel gain is significantly reduced, and further adjustments of $\alpha$ and $\beta$ have a negligible impact on the gain, as shown in Fig.~\ref{fig:Hplane_sim_vs_mea}.

These results not only confirm the accuracy of the proposed EM-based channel model, but more importantly, they demonstrate the necessity of accounting for the \emph{vectorial} nature of EM propagation in channel modeling.
Specifically, existing works on SRAS-enabled systems often characterize antenna radiation using a \emph{scalar} function~\cite{10848372,11134688,zheng2025rotatab}. 
Channel models built upon such an approximation cannot distinguish between different levels of polarization alignment (e.g., as illustrated in Figs.~\ref{fig:Eplane_sim_vs_mea} and \ref{fig:Hplane_sim_vs_mea}), and thus are likely to significantly overestimate the channel gain. 
As will be shown in Sec. \ref{subsec:scalar_vs_vector_channel}, this inaccurate modeling can severely degrade the communication performance of SRAS-enabled systems.

\subsection{Transmit Antenna Orientation Design}

In this section, we evaluate the communication performance gain achieved by optimizing the orientation angle of each antenna in SRAS-enabled systems.
Specifically, we consider a downlink MU-MIMO system with $N_\mathrm{t}=8$ spatial reconfigurable transmit antennas spaced at half-wavelength intervals, while each communication user is equipped with $N_{\mathrm{u}}=2$ fixed receive antennas.
The orientations of transmit antennas are optimized by the proposed Riemannian gradient ascent method to maximize the sum-rate across all users.
To comprehensively assess the average performance gain of each scheme, we randomly generate $K$
 user positions within a distance range of 20$\lambda$ to 200$\lambda$ from the transmit antenna array, and then calculate the sum-rate.
This process is repeated 5000 times to determine the average sum-rate improvement of each scheme.
As a representative example, both transmit and receive antennas are modeled as half-wavelength dipoles.
The transmit power is set to $P_\mathrm{t} = 1$ W and noise power is $\sigma_{\mathrm{n}}^2 = -50$ dBm.
The convergence tolerance of the Riemannian gradient ascent is chosen as $\varepsilon = 10^{-5}$.

Fig. \ref{fig:comp_userNum} depicts the sum-rate versus the number of users $K$ in the considered MU-MIMO system.
The baseline schemes are:
1) a conventional fixed half-wavelength spacing ULA;
2) \textcolor{black}{movable antenna (MA)} arrays with different aperture sizes, where the antenna positions are determined by the greedy algorithm~\cite{10709885}.
As can be observed, for all schemes, the sum-rate consistently increases with increasing $K$, and the \textcolor{black}{rotatable antenna (RA)} scheme achieves significantly higher performance than the baseline schemes.
Fig. \ref{fig:comp_userNum} also demonstrates that the sum-rate of MAs approaches that of rotatable antennas only when the aperture size is sufficiently large and $K$ is close to $N_{\mathrm{t}}$.
The reason behind this is twofold.
First, when $K$ is small, MMSE precoder nearly eliminates the multiuser interference.
In this regime, the sum-rate is primarily limited by the signal-to-noise ratio (SNR).
For MAs, increasing the aperture size has little effect on SNR, which explains why the performance gap among MAs with different aperture sizes is negligible.
In contrast, as analyzed in Sec. \ref{subsec:impact_on_commRate}, RAs can enhance channel gain by adjusting the polarization alignment between transmit and receive antennas, thereby improving the SNR. 
Consequently, the RA system achieves a significantly higher sum-rate than all baseline schemes.
Second, as $K$ increases and approaches $N_{\mathrm{t}}$, users are more likely to be closer to each other, causing the channel matrix to become ill-conditioned~\cite[Ch.~10]{tseWirelessComm} (i.e., its condition number increases).
This leads to sum-rate saturation.
In this regime, the system performance is constrained by limited spatial DoFs.
For MA-enabled systems, enlarging the aperture size will enhance the spatial diversity and increase the available spatial DoFs.
This explains the notable performance gain achieved by large-aperture MAs.
Notably, in the considered scenario, the MAs must increase their aperture size by a factor of $2.5$ (i.e., from $4 \lambda$ to $10 \lambda$) to approach the sum-rate achieved by RAs in large-$K$ regime.

Fig.~\ref{fig:comp_SNR} shows the sum-rate versus the transmit power $P_{\mathrm{t}}$.
It can be observed that, for all schemes, the sum-rate increases monotonically with $P_{\mathrm{t}}$.
RAs exhibit the fastest growth rate, indicating that they deliver a greater sum-rate gain per unit of additional transmit power.
This is because RAs enhance the effective channel gain through polarization alignment (as analyzed in Sec. \ref{subsec:impact_on_commRate}), which directly amplifies the received SNR for a given $P_{\mathrm{t}}$.
Consequently, the benefit of increasing $P_{\mathrm{t}}$ is more pronounced for RAs than for the baseline schemes.
Specifically, when $P_{\mathrm{t}} = 20$~dBm, RAs and MAs improve the sum-rate by $9.2\%$ and $2.6\%$, respectively, compared to the fixed ULA.
When $P_{\mathrm{t}} = 70$~dBm, RAs achieve a $36\%$ gain, whereas MAs yield only $10\%$.

These results confirm that antenna orientation optimization, by improving polarization alignment, provides substantial sum-rate gains over both fixed-ULA and movable-antenna systems.

\begin{figure}[t]
	\centering
	\vspace{0pt}
	\hspace{-29pt}
	\includegraphics[trim=0 0 0 0 clip, width=0.32\textwidth]{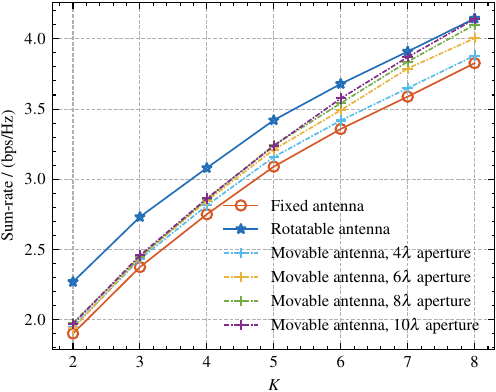}
	\caption{The sum-rate $R$ versus the number of users $K$ under different antenna systems.}
	\label{fig:comp_userNum}
	\vspace{0mm}  
\end{figure}

\begin{figure}[t]
	\centering
	\vspace{10pt}
	\hspace{-20pt}\includegraphics[trim=0 0 0 0 clip, width=0.32\textwidth]{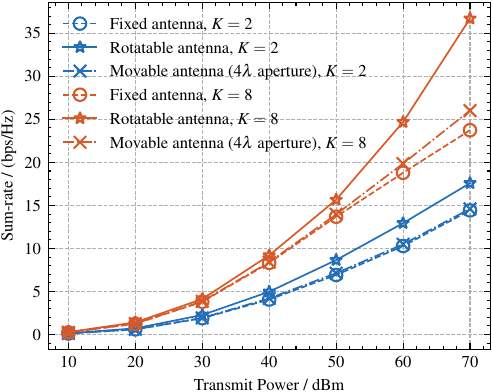}
	\vspace{0mm}
	\caption{The sum-rate $R$ versus the transmit power $P_{\mathrm{t}}$ under different antenna systems.}
	\label{fig:comp_SNR}
\end{figure}

\subsection{Necessity of the EM-based Channel Model for Orientation Design in RAs}\label{subsec:scalar_vs_vector_channel}

To demonstrate the necessity of the EM-based channel model for SRASs, in this section,
we compare the sum-rate achieved by RAs under different channel models.
Fig.~\ref{fig:scalar_vs_EM} compares three schemes:
1) orientations optimized using the proposed EM-based channel model;
2) orientations optimized using a widely adopted scalar channel model~\cite{10848372,11134688,zheng2025rotatab};
3) a conventional fixed antenna array.
Critically, the widely adopted scalar channel model characterizes the antenna radiation solely by a scalar function of the spatial angle, and therefore captures only the amplitude variation of the radiated field with direction.
It entirely omits the \emph{vectorial} polarization structure of the EM field.
{To ensure a fair comparison, the sum-rate of \emph{all} schemes is evaluated
under the proposed EM-based channel model}, whose accuracy has been independently
validated by both full-wave simulations and experimental measurements in
Sections~\ref{subsec:numerical_simulation} and~\ref{subsec:measurement}.

Two observations can be made from Fig.~\ref{fig:scalar_vs_EM}.
First, the orientation design under the EM-based channel model consistently achieves the highest sum-rate across all values of $K$.
Second, the scalar channel-based design not only yields a lower sum-rate
than the EM-based scheme, but as $K$ increases, it falls below even the conventional fixed antenna array.
This severe degradation stems from the incomplete physical modeling of the scalar channel model.
By describing antenna radiation using a scalar function and ignoring polarization mismatch, the scalar model produces biased predictions of channel gain under different antenna orientations.
As a result, orientations that appear optimal under the scalar model can, in the actual EM channel, create a strong polarization mismatch that significantly attenuates the received signal power.
This leads to a fundamental misdesign of the antenna orientations in RAs.
The observation is consistent with the measurement results in Figs.~\ref{fig:Eplane_sim_vs_mea} and~\ref{fig:Hplane_sim_vs_mea}, where changing the spin angle $\omega$ from $0^\circ$ to $90^\circ$ introduces a severe polarization mismatch that fundamentally alters the channel gain. This is an effect that the scalar model is structurally incapable of predicting.
These results demonstrate that the derived EM-based channel model is essential for unlocking the full potential of SRAS.

\begin{figure}[t]
	\centering
	\vspace{0pt}
	\hspace{0pt}\includegraphics[trim=0 0 0 0 clip, width=0.32\textwidth]{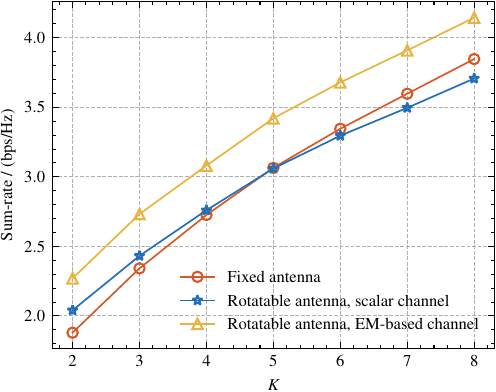}
	\vspace{0mm}
	\caption{The sum-rate $R$ versus the number of users $K$ under different channel models.}
	\label{fig:scalar_vs_EM}
\end{figure}

\subsection{Impact of Antenna Angular Resolution on Sum-rate}

In the previous simulations, we assumed that the orientation angles of RAs are continuous variables.
However, in practical systems, motor-driven rotators can typically adjust the antenna orientation only in discrete steps with a minimum step size, which inevitably degrades system performance.
In this section, we investigate the communication performance loss incurred due to this finite angular resolution of RAs.

Fig. \ref{fig:resolution} illustrates the degradation in communication performance as a function of the minimum rotation step size.
Specifically, given the sum-rate $R_{\Delta}$ of RAs with minimum rotation step size ${\Delta}$, its performance loss $L_{\Delta}$ is defined as
\begin{equation}
	\begin{aligned}
  		L_{\Delta} = \frac{R_{\mathrm{ideal}} - R_{\Delta} }{R_{\mathrm{ideal}} - R_{\mathrm{fix}}},
  	\end{aligned}
\end{equation} 
where $R_{\mathrm{ideal}}$ and $R_{\mathrm{fix}}$ denote the sum-rate of ideal RAs and conventional fixed antennas, respectively.
As can be observed, increasing the angular step size invariably degrades the system sum-rate, and this degradation becomes more pronounced as the number of users $K$ increases.
However, when $\Delta \leq \ang{10}$, the sum-rate loss of RAs is no more than $2\%$, and even when $\Delta$ increases to $\ang{20}$, the loss remains below $8\%$.
Notably, even the lowest-cost permanent-magnet motors can readily achieve a step size as fine as $\ang{7.5}$~\cite[Ch.~11]{gieras2002permanent}.
This implies that significant performance gains can be realized with RAs at minimal hardware cost, making RA-enabled communication systems attractive for practical deployment.

\begin{figure}[t]
	\centering
	\vspace{0pt}
	\hspace{0pt}\includegraphics[trim=0 0 0 0 clip, width=0.32\textwidth]{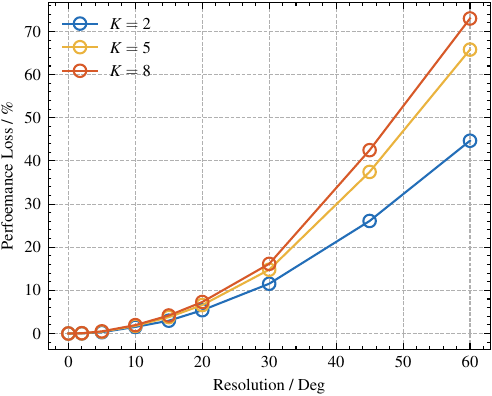}
	\vspace{0mm}
	\caption{The communication performance loss versus the minimum step size of antenna rotation angles.}
	\label{fig:resolution}
\end{figure}


\section{Conclusion}\label{sec:conclusion}

In this paper, we developed an EM-consistent channel modeling and orientation design framework for SRAS-enabled MIMO systems.
By leveraging the SVWE of the EM field, the proposed channel model applies to antennas of arbitrary structure and intrinsically captures polarization mismatch and other vectorial EM effects.
The model cleanly decouples antenna properties, orientation, and the propagation environment, enabling orientation-aware system design.
Both full-wave simulations and experimental measurements validated the accuracy of the proposed channel model.
The results further revealed that antenna orientation exerts a substantially stronger influence on the achievable rate than antenna displacement, primarily through polarization alignment.
By optimizing the orientation angles via the proposed Riemannian gradient ascent method, RAs achieved notable sum-rate gains over both movable-antenna and conventional fixed-antenna systems.
Moreover, the angular resolution analysis indicated that these gains are not severely degraded by the discrete step sizes of practical motor-driven rotators.
Based on these findings, we conclude that antenna orientation is an effective and practical design dimension for enhancing the communication performance of SRAS-enabled systems. This makes RAs a promising physical-layer technology for future 6G networks.

\appendices


\section{Proof of Proposition \ref{prop:hij}}\label{sec:appendix_prof_prop_hij}
The derivation follows the analysis of radiation, propagation and reception process presented in Sec. \ref{sec:EM_char_analysis}.

\paragraph{Radiation}

As illustrated in Fig.~\ref{fig:MIMO_setting_illu}, the transmit antenna is located at the origin of the coordinate system $(x,y,z)$.
When the transmit antenna is rotated, its radiation pattern in $(x,y,z)$ changes.
In the coordinate system $(x_{\mathrm{t}},y_{\mathrm{t}},z_{\mathrm{t}})$, the radiation field is given by
\begin{equation}\label{eq:appB_Et_svwe_in_Ot}
	\begin{aligned}
  		\mathbf{E}_{\mathrm{t}}(r_\mathrm{t},\theta_\mathrm{t},\varphi_\mathrm{t})=\frac{\kappa v_i}{\sqrt{\eta}}  \sum_{smn}   T_{smn}\mathbf{F}_{smn}^ {(3)} (r_\mathrm{t},\theta_\mathrm{t},\varphi_\mathrm{t}).
  	\end{aligned}
\end{equation}
Note that the coordinate transformation from $(x_\mathrm{t},y_\mathrm{t},z_\mathrm{t})$ to $(x,y,z)$ can be realized through the following sequence of operations:
\begin{itemize}
\item Rotate $(x_\mathrm{t},y_\mathrm{t},z_\mathrm{t})$ by an angle of $-\omega_{i}$ about the $z_\mathrm{t}$-axis to obtain $(x_1,y_1,z_1)$;
	\item Rotate $(x_1,y_1,z_1)$ by an angle of $-\beta_{i}$ about the $y_1$-axis to obtain $(x_2,y_2,z_2)$;
	\item Rotate $(x_2,y_2,z_2)$ by an angle of $-\alpha_{i}$ about the $z_2$-axis to obtain $(x,y,z)$.
\end{itemize}
Therefore, by leveraging the rotation property of SVWFs, we have
\begin{equation}\label{eq:appB_svwf_rot}
	\begin{aligned}
  		\mathbf{F}^{(3)}_{smn}(r_{\mathrm{t}},\theta_{\mathrm{t}},\varphi_{\mathrm{t}}) = \sum_{\mu=-n}^n  D^n_{\mu m}(-\omega_i,-\beta_i,-\alpha_i) \mathbf{F}^{(3)}_{s \mu n}(r,\theta,\varphi).
  	\end{aligned}
\end{equation}
Substituting (\ref{eq:appB_svwf_rot}) into (\ref{eq:appB_Et_svwe_in_Ot}) and rearranging terms yields the following expression of $\mathbf{E}_{\mathrm{t}}$,
\begin{equation}\label{eq:appB_Et_svwe_in_O}
	\begin{aligned}
  		\mathbf{E}_{\mathrm{t}}(r,\theta,\varphi)=\frac{\kappa v_i}{\sqrt{\eta}}  \sum_{s  \mu n} \left( \sum_{m=-n}^{n}  D^n_{\mu m}(-\omega_i,-\beta_i,-\alpha_i) T_{smn}  \right) \\ \mathbf{F}_{s \mu n}^ {(3)} (r,\theta,\varphi).
  	\end{aligned}
\end{equation}
(\ref{eq:appB_Et_svwe_in_O}) shows that, in the coordinate system $(x,y,z)$, the radiation coefficient $T'_{s \mu n}$ of the rotated antenna is given by
\begin{equation}\label{eq:appB_radi_coeff_in_O}
	\begin{aligned}
  		T'_{s \mu n} = \sum_{m=-n}^{n} D^n_{\mu m}(-\omega_i,-\beta_i,-\alpha_i) T_{smn}.
  	\end{aligned}
\end{equation}
(\ref{eq:appB_radi_coeff_in_O}) can be further expressed in a matrix form as
\begin{equation}\label{eq:appB_t'}
	\begin{aligned}
  		\mathbf{t}'=\mathbf{O}_i\mathbf{t}.
  	\end{aligned}
\end{equation}
The matrix $\mathbf{O}_i$ depends on the orientation of the $i$-th transmit antenna.
Using the index mappings $(s,\mu,n) \rightarrow p$ and $(\sigma,\rho,\nu) \rightarrow q$, the $(p,q)$-th element of $\mathbf{O}_i$ is given by
\begin{equation}
	\begin{aligned}
   		[\mathbf{O}_i]_{p,q} =  \delta_{s,\sigma}\,\delta_{n,\nu}\,D^{n}_{\mu \rho}(-\omega_i,-\beta_i,-\alpha_i).
  	\end{aligned}
\end{equation}
\textcolor{black}{Note that $\mathbf{O}_i$ is block-diagonal in the indices $s$ and $n$, i.e., only entries where $p$ and $q$ share the same $s$ and $n$ are non-zero.}

\paragraph{Propagation}

To characterize the propagation process, we first examine the transformation relationship of a single SVWF $\mathbf{F}_{s \mu n}^{(3)}$ between the two coordinate systems $(x,y,z)$ and $(x_\mathrm{r},y_\mathrm{r},z_\mathrm{r})$ depicted in Fig. \ref{fig:MIMO_setting_illu}.
From a spatial perspective, the transformation from $(x,y,z)$ to $(x_\mathrm{r},y_\mathrm{r},z_\mathrm{r})$ can be realized through the following sequence of operations:
\begin{itemize}
	\item Rotate $(x,y,z)$ by an angle of $\varphi_{ji}$ about the $z$-axis to obtain $(x_1,y_1,z_1)$;
	\item Rotate $(x_1,y_1,z_1)$ by an angle of $\theta_{ji}$ about the $y_1$-axis to obtain $(x_2,y_2,z_2)$;
	\item Translate $(x_2,y_2,z_2)$ by a distance $\Vert \mathbf{d}_{ji} \Vert$ along the $z_2$-axis to obtain $(x_3,y_3,z_3)$;
	\item Rotate $(x_3,y_3,z_3)$ by an angle of $-\theta_{ji}$ about the $y_3$-axis to obtain $(x_4,y_4,z_4)$;
	\item Rotate $(x_4,y_4,z_4)$ by an angle of $-\varphi_{ji}$ about the $z_4$-axis to arrive at the target coordinate system $(x_\mathrm{r},y_\mathrm{r},z_\mathrm{r})$;
\end{itemize}
By leveraging the rotation and translation property of SVWFs, we have
\begin{equation}\label{eq:Fsmn_general_transform}
	\begin{aligned}
		\mathbf{F}_{s \mu n}^{(3)}  (r,\theta,\varphi) =  \sum_{\sigma \nu  \gamma \rho}	 D^{n}_{\gamma \mu}({  0},\theta_{ji},{  \varphi_{ji}})  D^{\nu}_{\rho \gamma}({  -\varphi_{ji}},-\theta_{ji},{  0}) \\
		C^{sn(3)}_{\sigma \gamma \nu}(\kappa \Vert \mathbf{d}_{ji} \Vert) \mathbf{F}_{\sigma \rho \nu}^{(1)}(r_\mathrm{r},\theta_\mathrm{r},\varphi_\mathrm{r}).
	\end{aligned}
\end{equation}
Substituting (\ref{eq:Fsmn_general_transform}) into (\ref{eq:appB_Et_svwe_in_O}), we obtain the SVWE of the radiation field in the coordinate system $(x_\mathrm{r},y_\mathrm{r},z_\mathrm{r})$, which is given by
\begin{equation}\label{eq:Et_in_Or}
    	\begin{aligned}
    		\mathbf{E}_{\mathrm{t}} (r_\mathrm{r},\theta_\mathrm{r},\varphi_\mathrm{r}) = 
  			\frac{\kappa v_i}{\sqrt{\eta}}	\sum_{\substack {s \mu n\\ \sigma \nu  \gamma \rho}} T'_{s \mu n}  D^{n}_{\gamma \mu}({  0},\theta_{ji},{  \varphi_{ji}}) \\  D^{\nu}_{\rho \gamma}({  -\varphi_{ji}},-\theta_{ji},{  0})
		C^{sn(3)}_{\sigma \gamma \nu}(\kappa \Vert \mathbf{d}_{ji} \Vert) \mathbf{F}_{\sigma \rho \nu}^{(1)}(r_\mathrm{r},\theta_\mathrm{r},\varphi_\mathrm{r})
  		\end{aligned}
  \end{equation}
It can be observed that in the coordinate system $(x_\mathrm{r},y_\mathrm{r},z_\mathrm{r})$, the radiated field of the transmit antenna can be represented as a superposition of infinitely many standing waves.
Note that each standing wave can be decomposed into two counter-propagating traveling waves, i.e.,
\begin{equation}
	\begin{aligned}
  		\mathbf{F}_{\sigma \rho \nu}^{(1)}(r_\mathrm{r},\theta_\mathrm{r},\varphi_\mathrm{r}) =\frac{1}{2}\left(\mathbf{F}_{\sigma \rho \nu}^{(3)}(r_\mathrm{r},\theta_\mathrm{r},\varphi_\mathrm{r}) + \mathbf{F}_{\sigma \rho \nu}^{(4)}(r_\mathrm{r},\theta_\mathrm{r},\varphi_\mathrm{r})\right), \notag
  	\end{aligned}
\end{equation}
where $\mathbf{F}_{\sigma \rho \nu}^{(3)}$ and $\mathbf{F}_{\sigma \rho \nu}^{(4)}$ correspond to the outward- and inward-propagating components, respectively.
Noting that only the inward-propagating wave $\mathbf{F}_{\sigma \rho \nu}^{(4)}(r_\mathrm{r},\theta_\mathrm{r},\varphi_\mathrm{r})$ contributes to the response at the port of the receive antenna, the effective incident electric field $\mathbf{E}_{\mathrm{r}}$ at the receiver is given by
\begin{equation}\label{eq:Er_eff_in_Or}
	\begin{aligned}
  		\mathbf{E}_{\mathrm{r}} (r_\mathrm{r},\theta_\mathrm{r},\varphi_\mathrm{r}) = 
  			\frac{\kappa v_i}{2 \sqrt{\eta}}	\sum_{\substack {s \mu n\\ \sigma \nu  \gamma \rho}} T'_{s \mu n}  D^{n}_{\gamma \mu}({  0},\theta_{ji},{  \varphi_{ji}}) \\  D^{\nu}_{\rho \gamma}({  -\varphi_{ji}},-\theta_{ji},{  0})
		C^{sn(3)}_{\sigma \gamma \nu}(\kappa \Vert \mathbf{d}_{ji} \Vert) \mathbf{F}_{\sigma \rho \nu}^{(4)}(r_\mathrm{r},\theta_\mathrm{r},\varphi_\mathrm{r}).
  	\end{aligned}
\end{equation}

\paragraph{Reception}

According to (\ref{eq:Er_receive_signal_form}), the received signal $w_j$ at the receive antenna's port can be expressed as
\begin{equation}\label{eq:wj_appendix}
	\begin{aligned}
  		w_j =  
  		  \frac{\kappa v_i}{2 \sqrt{\eta}}	
  		  \sum_{\substack {s \mu n\\ \sigma \nu  \gamma \rho}} 
  		  T'_{s \mu n} R_{\sigma \rho \nu} 
  		  D^{n}_{\gamma \mu}({  0},\theta_{ji},{  \varphi_{ji}}) 
  		  \\
  		  D^{\nu}_{\rho \gamma}({  -\varphi_{ji}},-\theta_{ji},{  0}) 
  		  C^{sn(3)}_{\sigma \gamma \nu}(\kappa \Vert \mathbf{d}_{ji} \Vert)
  	\end{aligned}
\end{equation}

Based on the SVWE framework, we have fully characterized the signal transmission process in SRASs.
By applying index mapping $(s,\mu,n) \leftrightarrow p$ and $(\sigma,\rho,\nu) \leftrightarrow q$ to (\ref{eq:wj_appendix}), the channel gain $h_{ji} = w_j / v_i$ between the transmit and receive antenna can be expressed in a matrix form as
\begin{equation}\label{eq:hij_appendix}
	\begin{aligned}
  		h_{ji} = \mathbf{r}^T {\hat{\mathbf{D}}}_{ji} \mathbf{t}'.
  	\end{aligned}
\end{equation}
The matrix $\hat{\mathbf{D}}_{ji}$ only depends on the displacement vector $\mathbf{d}_{ji}$, and its $({  q},{  p})$-th element is given by
\begin{equation}
	\begin{aligned}\label{eq:gene_G}
  		[\hat{{\mathbf{D}}}_{ji}]_{{  q},{  p}} = \frac{\kappa}{2\sqrt{\eta}}   
  		&\sum_{\gamma=-n}^{n}
  		D^{n}_{\gamma \mu}({  0},\theta_{ji},{  \varphi_{ji}}) 
  		\\
  		&D^{\nu}_{\rho \gamma}({  -\varphi_{ji}},-\theta_{ji},{  0}) 
  		C^{sn(3)}_{\sigma \gamma \nu}(\kappa \Vert \mathbf{d}_{ji} \Vert)
  	\end{aligned}
\end{equation}
Under the far-field assumption, i.e., $\kappa \Vert \mathbf{d}_{ji} \Vert \rightarrow \infty$, the following asymptotic formulations hold:
\begin{subequations}
	\begin{align}
		&	C^{sn(3)}_{\sigma \gamma \nu}(A) = o(\frac{1}{A}) \quad \text{for }\gamma \neq \pm 1, \\
		&	C^{sn(3)}_{\sigma 1 \nu}(A) = a(n,\nu) \frac{e^{\jmath A}}{A}+o(\frac{1}{A}), \\
		&	C^{sn(3)}_{\sigma, -1, \nu}(A) = (-1)^{s+\sigma} a(n,\nu) \frac{e^{\jmath A}}{A}+o(\frac{1}{A}),
	\end{align}
\end{subequations}
where $a(n,\nu)$ is defined in (\ref{eq:anv}).
This implies that (\ref{eq:gene_G}) can be further simplified to
\begin{equation}
	\begin{aligned}
  		[\hat{{\mathbf{D}}}_{ji}]_{{  q},{  p}} = 
  		\frac{a(n,\nu)}{2\sqrt{\eta}}
  		\frac{e^{\jmath \kappa \Vert \mathbf{d}_{ji} \Vert}}{ \Vert \mathbf{d}_{ji} \Vert} 
  		e^{\jmath ( \mu - \rho)\varphi_{ji}} 
  		b^{s,\mu,n}_{\sigma,\rho,\nu}(\theta_{ji})
  	\end{aligned}
\end{equation}
where $a(n,\nu)$ and $b^{s,\mu,n}_{\sigma,\rho,\nu}(\theta_{ji})$ are defined in (\ref{eq:anv}) and (\ref{eq:b_theta}), respectively.
We define $\mathbf{P}_{ji}$ to represent the part of $\hat{\mathbf{D}}_{ji}$ that depends on the relative angles $\varphi_{ji}$ and $\theta_{ji}$ between the $i$-th transmit and $j$-th receive antenna. Substituting  (\ref{eq:appB_t'}) into (\ref{eq:hij_appendix}) yields the channel gain expression presented in (\ref{eq:EM_based_SISO_channel_model}), thereby completing the proof.

\section{Derivation of the Riemannian Gradient Components}\label{sec:appendix_riemannian_gradient}

\textcolor{black}{We first show that ${\mathbf{G}}_i$ admits the expansion ${\mathbf{G}}_i = g_{i,x}\mathbf{G}_x + g_{i,y}\mathbf{G}_y + g_{i,z}\mathbf{G}_z$. Since $\operatorname{grad}_{\mathbf{U}_i} R$ is a tangent vector in $T_{\mathbf{U}_i}\mathrm{SO}(3)$, it must take the form $\mathbf{U}_i\,{\mathbf{G}}_i$ where ${\mathbf{G}}_i$ is a $3\times 3$ skew-symmetric matrix. The matrices $\mathbf{G}_x$, $\mathbf{G}_y$, and $\mathbf{G}_z$ form the standard basis of $3\times 3$ skew-symmetric matrices, and any such matrix admits a unique expansion in this basis, yielding (\ref{eq:hat_gi}). The main task is therefore to derive the explicit expressions of the coefficients $g_{i,a}$ ($a\in\{x,y,z\}$), which are the directional derivatives $dR/dt|_{t=0}$ under the infinitesimal rotation $\mathbf{U}_i \to \mathbf{U}_i \cdot \exp(t \mathbf{G}_a)$. These coefficients are obtained via the chain rule through the EM channel model, as detailed in the following subsections.}

\subsection{Derivative of the Orientation Matrix}

Recall that the entries of $\mathbf{O}_i$ are the Wigner D-functions $D^n_{\mu m}$.
In the language of group theory, this means that $\mathbf{O}_i$ is a \emph{representation} of $\mathrm{SO}(3)$ on the SVWF space. The representation is a matrix-valued function that maps each rotation to a linear operator and respects the group structure.
A key consequence of this structure is that the derivative of $\mathbf{O}_i$ with respect to an infinitesimal body-axis rotation admits a compact closed-form expression involving only the known and precomputable operators $\mathrm{d}\rho(\mathbf{G}_a)$, namely
\begin{equation}
	\begin{aligned}
  		\left.\frac{\mathrm{d}}{\mathrm{d}t}\mathbf{O}_i \bigl(\mathbf{U}_i \cdot \exp{(t \mathbf{G}_a)}\bigr)\right|_{t=0} = \mathbf{O}_i  \mathrm{d}\rho(\mathbf{G}_a),
  	\end{aligned}
\end{equation}
where the operators $\mathrm{d}\rho(\mathbf{G}_a)$ are simply $L\times L$ block-diagonal matrices acting on the SVWF mode coefficients.
Specifically, under the index mapping $(s,\mu,n)\leftrightarrow p$ and $(s,m,n)\leftrightarrow q$, for each degree $n$, $(\mu,m)$-th entries of $\mathrm{d}\rho(\mathbf{G}_a)$ are
\begin{align}
  		\bigl[\mathrm{d}\rho(\mathbf{G}_z)\bigr]_{\mu,m} &= -\jmath m\,\delta_{\mu,m},\\
	\bigl[\mathrm{d}\rho(\mathbf{G}_x)\bigr]_{\mu,m} &= -\frac{\jmath}{2}\bigl(c_{n,m}^+\,\delta_{\mu,m+1} + c_{n,m}^-\,\delta_{\mu,m-1}\bigr),\\
	\bigl[\mathrm{d}\rho(\mathbf{G}_y)\bigr]_{\mu,m} &= -\frac12\bigl(c_{n,m}^+\,\delta_{\mu,m+1} - c_{n,m}^-\,\delta_{\mu,m-1}\bigr),
  	\end{align}
with $c_{n,m}^{\pm} = \sqrt{n(n+1) - m(m\pm 1)}$.

\subsection{Perturbation of the Channel Matrix}

Let $\tilde{\mathbf{H}}^{(a,i)} = \partial\mathbf{H}/\partial t|_{t=0}$.
Only column $i$ is non-zero, with entries
\begin{equation}
	\begin{aligned}
  		\bigl[\tilde{\mathbf{H}}^{(a,i)}\bigr]_{:,i}
	= \frac{e^{\jmath \kappa\|\mathbf{d}_{ji}\|}}{\kappa\|\mathbf{d}_{ji}\|}\;
   \mathbf{r}^T \mathbf{P}_{ji}\; \mathbf{O}_i\,d\rho(\mathbf{G}_a)\,\mathbf{t}.
  	\end{aligned}
\end{equation}

\subsection{Perturbation of the MMSE Precoder}

The MMSE precoder is $\mathbf{F} = \zeta \mathbf{F}_{\mathrm{un}}$ with $\mathbf{F}_{\mathrm{un}} = \mathbf{H}^H\mathbf{M}$ and $\mathbf{M} = (\mathbf{H}\mathbf{H}^H + \epsilon\mathbf{I})^{-1}$.
The perturbation of $\mathbf{F}_{\mathrm{un}}$ induced by the rotation of antenna $i$ about axis $a$ is $\tilde{\mathbf{F}}_{\mathrm{un}}^{(a,i)} = \partial\mathbf{F}_{\mathrm{un}}/\partial t|_{t=0}$, which depends on $\tilde{\mathbf{H}}^{(a,i)}$ through the matrix derivative of $\mathbf{M}$:
\begin{equation}
	\begin{aligned}
  		\tilde{\mathbf{F}}_{\mathrm{un}}^{(a,i)}
= \bigl(\mathbf{I}_{N_{\mathrm{t}}} - \mathbf{H}^H \mathbf{M} \mathbf{H}\bigr)\,
   \bigl(\tilde{\mathbf{H}}^{(a,i)}\bigr)^H \mathbf{M}
   - \mathbf{H}^H \mathbf{M}\, \tilde{\mathbf{H}}^{(a,i)}\, \mathbf{H}^H \mathbf{M}.
  	\end{aligned}
\end{equation}
The power normalization factor is $\zeta = \sqrt{P}/\|\mathbf{F}_{\mathrm{un}}\|_F$, with derivative
$\tilde{\zeta}^{(a,i)} = -\zeta  \mathrm{Re}\,\mathrm{Tr}\bigl(\mathbf{F}_{\mathrm{un}}^H \tilde{\mathbf{F}}_{\mathrm{un}}^{(a,i)}\bigr) / \|\mathbf{F}_{\mathrm{un}}\|_F^2$.
The full precoder perturbation is then
\begin{equation}
	\begin{aligned}
  		\tilde{\mathbf{F}}^{(a,i)} = \zeta\bigl(\tilde{\mathbf{F}}_{\mathrm{un}}^{(a,i)} + \tilde{\zeta}_0^{(a,i)}\,\mathbf{F}_{\mathrm{un}}\bigr),
  	\end{aligned}
\end{equation}
where $\tilde{\zeta}_0^{(a,i)} = \tilde{\zeta}^{(a,i)}/\zeta$.
Partitioning 
\begin{equation}
	\begin{aligned}
  		\tilde{\mathbf{F}}^{(a,i)} = [\tilde{\mathbf{F}}_1^{(a,i)},\dots,\tilde{\mathbf{F}}_K^{(a,i)}]
  	\end{aligned}
\end{equation}
with $\tilde{\mathbf{F}}_j^{(a,i)} \in \mathbb{C}^{N_{\mathrm{t}}\times N_{\mathrm{u}}}$,
the $\mathbf{Q}$-matrix perturbation is
\begin{equation}
	\begin{aligned}
  		\tilde{\mathbf{Q}}_j^{(a,i)} = \tilde{\mathbf{F}}_j^{(a,i)} \mathbf{F}_j^H + \mathbf{F}_j \bigl(\tilde{\mathbf{F}}_j^{(a,i)}\bigr)^H.
  	\end{aligned}
\end{equation}

\vspace{-10pt}

\subsection{Derivative of the Per-User Rate}

Define $\mathbf{A}_k = \mathbf{H}_k \mathbf{Q}_k \mathbf{H}_k^H \mathbf{N}_k$, $\mathbf{B}_k = (\mathbf{I} + \mathbf{A}_k)^{-1}$, and the Hermitian matrix
\begin{equation}
	\begin{aligned}
  		\tilde{\mathbf{C}}_{kj}^{(a,i)}
= \tilde{\mathbf{H}}_k \mathbf{Q}_j \mathbf{H}_k^H
 + \mathbf{H}_k \mathbf{Q}_j \tilde{\mathbf{H}}_k^H
 + \mathbf{H}_k \tilde{\mathbf{Q}}_j^{(a,i)} \mathbf{H}_k^H,
  	\end{aligned}
\end{equation}
where $\tilde{\mathbf{H}}_k$ is the block of rows belonging to user $k$ of $\tilde{\mathbf{H}}^{(a,i)}$.
The directional derivative of $R_k$ is then
\begin{equation}
	\begin{aligned}
  		\left.\frac{dR_k}{dt}\right|_{t=0}
= \mathrm{Tr} \left( \mathbf{B}_k (\tilde{\mathbf{C}}_{kk}^{(a,i)} - \mathbf{A}_k \sum_{j\neq k}\tilde{\mathbf{C}}_{kj}^{(a,i)})  \mathbf{N}_k \right).
  	\end{aligned}
\end{equation}
Summing $dR_k/dt$ over all $K$ users yields the required gradient component $g_{i,a}$.

\bibliographystyle{IEEEtran}
\bibliography{IEEEabrv,references}

@STRING{TSP         = "{IEEE} Trans. Signal Process."}

@STRING{IEEE_CL       = "{IEEE} Commun. Lett."}

@STRING{JSAC       = "{IEEE} J. Sel. Areas Commun."}

@STRING{TCOMM        = "{IEEE} Trans. Commun."}

@STRING{IEEE_TWC       = "{IEEE} Trans. Wireless Commun."}

@STRING{TWC       = "{IEEE} Trans. Wireless Commun."}

@STRING{IEEE_CST = "IEEE Commun. Surv. Tutor."}

@STRING{IEEE_APMag = "IEEE Antennas Propag. Mag."}

@STRING{IEEE_OJAP= "IEEE Open J. Antennas Propag."}

@STRING{TSP       = "{IEEE} Trans. Signal Process."}

@STRING{TAP       = "{IEEE} Trans. Antennas Propag."}

@article{xu2026generalembasedchannelmodel,
	title={A General {EM}-Based Channel Model for Reconfigurable Antenna Systems}, 
	author={Chen Xu and Xianghao Yu},
	year={2026},
	month={Apr.},
	journal={arXiv preprint arXiv:2604.22264}, 
}

@article{Song2025,
  title = {An overview on IRS-enabled sensing and communications for 6G: architectures,  fundamental limits,  and joint beamforming designs},
  volume = {68},
  ISSN = {1869-1919},
  DOI = {10.1007/s11432-024-4345-5},
  number = {5},
  journal = {Sci. China Inf. Sci.},
  publisher = {Springer Science and Business Media LLC},
  author = {Song,  Xianxin and Fang,  Yuan and Wang,  Feng and Ren,  Zixiang and Yu,  Xianghao and Zhang,  Ye and Liu,  Fan and Xu,  Jie and Ng,  Derrick Wing Kwan and Zhang,  Rui and Cui,  Shuguang},
  year = {2025},
  month = Apr 
}

@ARTICLE{1445988,
  author={Waterman, P.C.},
  journal={Proc. IEEE}, 
  title={Matrix formulation of electromagnetic scattering}, 
  year={1965},
  volume={53},
  number={8},
  pages={805-812},
  month={Aug.},
  doi={10.1109/PROC.1965.4058}}

@ARTICLE{6231655,
  author={Della Giovampaola, Cristian and Martini, Enrica and Toccafondi, Alberto and Maci, Stefano},
  journal=TAP, 
  title={A Hybrid {PO}/{G}eneralized-Scattering-Matrix Approach for Estimating the Reflector Induced Mismatch}, 
  year={2012},
  volume={60},
  number={9},
  month={Jul.},
  pages={4316-4325},
  doi={10.1109/TAP.2012.2207062}}

@ARTICLE{1406246,
  author={Rubio, J. and Gonzalez, M.A. and Zapata, J.},
  journal=TAP,
  title={Generalized-scattering-matrix analysis of a class of finite arrays of coupled antennas by using 3-{D} {FEM} and spherical mode expansion}, 
  year={2005},
  volume={53},
  number={3},
  pages={1133-1144},
  month={Mar.},
  doi={10.1109/TAP.2004.842687}}

@article{Gouesbet2010,
  title = {T-matrix formulation and generalized {L}orenz-{M}ie theories in spherical coordinates},
  volume = {283},
  ISSN = {0030-4018},
  DOI = {10.1016/j.optcom.2009.10.092},
  number = {4},
  journal = {Opt. Commun.},
  publisher = {Elsevier BV},
  author = {Gouesbet,  G.},
  year = {2010},
  month = {Feb.},
  pages = {517-521}
}

@ARTICLE{11142311,
  author={Shao, Xiaodan and Mei, Weidong and You, Changsheng and Wu, Qingqing and Zheng, Beixiong and Wang, Cheng-Xiang and Li, Junling and Zhang, Rui and Schober, Robert and Zhu, Lipeng and Zhuang, Weihua and Shen, Xuemin},
  journal=IEEE_CST, 
  title={A Tutorial on Six-Dimensional Movable Antenna for {6G} Networks: Synergizing Positionable and Rotatable Antennas}, 
  year={2026},
  volume={28},
  number={},
  month={Aug.},
  pages={3666-3709},
  doi={10.1109/COMST.2025.3602939}}

@ARTICLE{9264694,
  author={Wong, Kai-Kit and Shojaeifard, Arman and Tong, Kin-Fai and Zhang, Yangyang},
  journal=TWC, 
  title={Fluid Antenna Systems}, 
  year={2021},
  volume={20},
  number={3},
  month={Nov.},
  pages={1950-1962},
  doi={10.1109/TWC.2020.3037595}}

@ARTICLE{1261332,
  author={Spencer, Q.H. and Swindlehurst, A.L. and Haardt, M.},
  journal=TSP, 
  title={Zero-forcing methods for downlink spatial multiplexing in multiuser {MIMO} channels}, 
  year={2004},
  volume={52},
  number={2},
  month={Feb.},
  pages={461-471},
  doi={10.1109/TSP.2003.821107}}

@ARTICLE{9650760,
  author={Wong, Kai-Kit and Tong, Kin-Fai},
  journal=TWC, 
  title={Fluid Antenna Multiple Access}, 
  year={2022},
  volume={21},
  number={7},
  month={Dec.},
  pages={4801-4815},
  doi={10.1109/TWC.2021.3133410}}

@ARTICLE{10500751,
  author={Gong, Tierui and Wei, Li and Huang, Chongwen and Yang, Zhijia and He, Jiguang and Debbah, Mérouane and Yuen, Chau},
  journal=JSAC, 
  title={Holographic {MIMO} Communications With Arbitrary Surface Placements: Near-Field {LoS} Channel Model and Capacity Limit}, 
  year={2024},
  volume={42},
  number={6},
  pages={1549-1566},
  month = {Apr.},
  doi={10.1109/JSAC.2024.3389126}}

@ARTICLE{11258091,
  author={Zhang, Runxin and Shao, Yulin and Eldar, Yonina C.},
  journal=TWC, 
  title={Polarization-Aware Movable Antenna}, 
  year={2026},
  volume={25},
  number={},
  pages={7428-7442},
  month={Nov.},
  doi={10.1109/TWC.2025.3631482}}

@ARTICLE{11006094,
  author={Pizzo, Andrea and Marzetta, Thomas L. and Sanguinetti, Luca},
  journal=JSAC, 
  title={Spatially-Stationary Model for Holographic {MIMO} Small-Scale Fading},
  year={2020},
  volume={38},
  number={9},
  pages={1964-1979},
  month={Jun.},
  doi={10.1109/JSAC.2020.3000877}}

@book{Hansen_1988, title={Spherical near-field antenna measurements}, publisher={Institution of Engineering and Technology - IET}, author={Hansen, J. E.}, year={1988}}

@ARTICLE{10500425,
  author={Zhi, Kangda and Pan, Cunhua and Ren, Hong and Chai, Kok Keong and Wang, Cheng Xiang and Schober, Robert and You, Xiaohu},
  journal=JSAC, 
  title={Performance Analysis and Low-Complexity Design for {XL-MIMO} With Near-Field Spatial Non-Stationarities}, 
  year={2024},
  volume={42},
  number={6},
  month={Apr.},
  pages={1656-1672},
  doi={10.1109/JSAC.2024.3389128}}

@ARTICLE{liu2025ma,
  author={Liu, Shicong and Yu, Xianghao and Xu, Jie and Zhang, Rui},
  journal=IEEE_TWC, 
  title={Near-Field Communication With Massive Movable Antennas: {A} Functional Perspective}, 
  year={2026},
  volume={25},
  month={Apr.},
  number={},
  pages={14455-14470},
  doi={10.1109/TWC.2026.3677546}}

@ARTICLE{liu2025nfddd,
  author={Liu, Shicong and Yu, Xianghao and Song, Shenghui and Letaief, Khaled B.},
  journal=IEEE_TWC, 
  title={Near-Field Communication With Movable Antennas: An Electrostatic Equilibrium Perspective}, 
  year={2026},
  volume={25},
  number={},
  month={Apr.},
  pages={16024-16039},
  doi={10.1109/TWC.2026.3686424}}

@inbook{gieras2002permanent,
  title={Permanent Magnet Motor Technology: Design and Applications},
  author={Gieras, J.F. and Wing, M.},
  chapter = {11},
  isbn={9780824743949},
  year={2002},
  pages={457--484},
  publisher={Taylor \& Francis},
  address={Abingdon, UK}
}

@book{Stratton2015,
  title = {Electromagnetic Theory},
  ISBN = {9781119134640},
  DOI = {10.1002/9781119134640},
  publisher = {Wiley},
  author = {Stratton,  Julius Adams},
  year = {2015},
  month = {Oct.} 
}

@ARTICLE{1143727,
  author={Yaghjian, A.},
  journal=TAP, 
  title={An overview of near-field antenna measurements}, 
  year={1986},
  volume={34},
  number={1},
  pages={30-45},
  month={Jan.},
  doi={10.1109/TAP.1986.1143727}}

@inbook{tseWirelessComm,
author = {Tse, David and Viswanath, Pramod},
title = {Fundamentals of wireless communication},
year = {2005},
isbn = {0521845270},
chapter={10},
pages={497--578},
publisher = {Cambridge University Press},
address={New York, USA}
}

@ARTICLE{9110848,
  author={Pizzo, Andrea and Marzetta, Thomas L. and Sanguinetti, Luca},
  journal=JSAC, 
  title={Spatially-Stationary Model for Holographic {MIMO} Small-Scale Fading}, 
  year={2020},
  volume={38},
  number={9},
  month={Jun.},
  pages={1964-1979},
  doi={10.1109/JSAC.2020.3000877}}

@book{alma990026142230205776,
  title = {Absorption and Scattering of Light by Small Particles},
  ISBN = {9783527618156},
  DOI = {10.1002/9783527618156},
  publisher = {Wiley},
  author = {Bohren,  Craig F. and Huffman,  Donald R.},
  year = {1998},
  month = {Apr.},
}

@ARTICLE{6474484,
  author={Haupt, Randy L. and Lanagan, Michael},
  journal=IEEE_APMag, 
  title={Reconfigurable Antennas}, 
  year={2013},
  volume={55},
  number={1},
  pages={49-61},
  month={Mar.},
  doi={10.1109/MAP.2013.6474484}}

@ARTICLE{10740058,
  author={Zhang, Jichen and Rao, Junhui and Li, Zan and Ming, Zhaoyang and Chiu, Chi-Yuk and Wong, Kai-Kit and Tong, Kin-Fai and Murch, Ross},
  journal=IEEE_OJAP, 
  title={A Novel Pixel-Based Reconfigurable Antenna Applied in Fluid Antenna Systems With High Switching Speed}, 
  year={2025},
  volume={6},
  number={1},
  pages={212-228},
  month={Nov.},
  doi={10.1109/OJAP.2024.3489215}}

@ARTICLE{9131873,
  author={Wong, Kai Kit and Shojaeifard, Arman and Tong, Kin-Fai and Zhang, Yangyang},
  journal=IEEE_CL, 
  title={Performance Limits of Fluid Antenna Systems}, 
  year={2020},
  volume={24},
  number={11},
  pages={2469-2472},
  month={Jul.},
  doi={10.1109/LCOMM.2020.3006554}}

@ARTICLE{zheng2025rotatab,
  author={Zheng, Beixiong and Wu, Qingjie and Ma, Tiantian and Zhang, Rui},
  journal=TCOMM, 
  title={Rotatable Antenna-Enabled Wireless Communication: Modeling and Optimization}, 
  year={2026},
  volume={74},
  month={Mar.},
  number={},
  pages={6825-6842},
  doi={10.1109/TCOMM.2026.3672230}}

@ARTICLE{10318061,
  author={Zhu, Lipeng and Ma, Wenyan and Zhang, Rui},
  journal=TWC, 
  title={Modeling and Performance Analysis for Movable Antenna Enabled Wireless Communications}, 
  year={2024},
  volume={23},
  number={6},
  month={Nov.},
  pages={6234-6250},
  doi={10.1109/TWC.2023.3330887}}

@ARTICLE{10709885,
  author={Zhu, Lipeng and Ma, Wenyan and Xiao, Zhenyu and Zhang, Rui},
  journal=TWC, 
  title={Performance Analysis and Optimization for Movable Antenna Aided Wideband Communications}, 
  year={2024},
  volume={23},
  number={12},
  pages={18653-18668},
  month={Oct.},
  doi={10.1109/TWC.2024.3471698}}

@book{absil2009optimization,
  title={Optimization algorithms on matrix manifolds},
  author={Absil, P-A and Mahony, Robert and Sepulchre, Rodolphe},
  year={2009},
  publisher={Princeton University Press}
}

@book{Hall2003,
  title = {Lie Groups,  Lie Algebras,  and Representations},
  ISBN = {9780387215549},
  ISSN = {0072-5285},
  DOI = {10.1007/978-0-387-21554-9},
  journal = {Graduate Texts in Mathematics},
  publisher = {Springer New York},
  author = {Hall,  Brian C.},
  year = {2003}
}

@ARTICLE{11134688,
  author={Xiong, Xue and Zheng, Beixiong and Wu, Wen and Shao, Xiaodan and Dai, Liang and Zhao, Ming-Min and Tang, Jie},
  journal=IEEE_CL, 
  title={Efficient Channel Estimation for Rotatable Antenna-Enabled Wireless Communication}, 
  year={2025},
  volume={14},
  number={11},
  month={Aug.},
  pages={3719-3723},
  doi={10.1109/LWC.2025.3601979}}

@ARTICLE{10848372,
  author={Shao, Xiaodan and Zhang, Rui and Jiang, Qijun and Schober, Robert},
  journal=JSAC, 
  title={{6D} Movable Antenna Enhanced Wireless Network via Discrete Position and Rotation Optimization}, 
  year={2025},
  volume={43},
  month={Jan},
  number={3},
  pages={674-687},
  doi={10.1109/JSAC.2025.3531571}}

@ARTICLE{Yang2017,
  author={Yang, Xue and Xu, Shenheng and Yang, Fan and Li, Maokun and Hou, Yangqing and Jiang, Shuidong and Liu, Lei},
  journal=TAP, 
  title={A Broadband High-Efficiency Reconfigurable Reflectarray Antenna Using Mechanically Rotational Elements}, 
  year={2017},
  volume={65},
  number={8},
  pages={3959-3966},
  month=aug,
  doi={10.1109/TAP.2017.2708079}}

@ARTICLE{9091906,
  author={Miao, Yang and Haneda, Katsuyuki and Naganawa, Jun-Ichi and Kim, Minseok and Takada, Jun-Ichi},
  journal=TWC, 
  title={Measurement-Based Analysis and Modeling of Multimode Channel Behaviors in Spherical Vector Wave Domain}, 
  year={2020},
  volume={19},
  number={8},
  month={May},
  pages={5345-5358},
  doi={10.1109/TWC.2020.2992533}}

@ARTICLE{5159555,
  author={Alayon Glazunov, AndrÉs and Gustafsson, Mats and Molisch, Andreas F. and Tufvesson, Fredrik and Kristensson, Gerhard},
  journal=TAP, 
  title={Spherical Vector Wave Expansion of {G}aussian Electromagnetic Fields for Antenna-Channel Interaction Analysis}, 
  year={2009},
  volume={57},
  number={7},
  pages={2055-2067},
  month={Jul.},
  doi={10.1109/TAP.2009.2016686}}

\end{document}